\newenvironment{conditionsiii}
{%
	\begin{list}{\rm (\roman{enumi})}%
	{\noindent%
		\usecounter{enumi}%
		\setlength{\topsep}{2pt}%
		\setlength{\partopsep}{0pt}%
		\setlength{\itemsep}{2pt}%
		\setlength{\parsep}{0pt}%
		\setlength{\leftmargin}{2.5em}%
		\setlength{\labelwidth}{1.5em}%
		\setlength{\labelsep}{0.5em}%
		\setlength{\listparindent}{0pt}%
		\setlength{\itemindent}{0pt}%
	}%
}%
{\end{list}}%
\newcommand{\nto}{\rightarrowtriangle}
\newcommand{\cl}[1]{\mathcal{#1}}
\newcommand{\imp}{\mbox{$\,{\rightarrow}\,$}}
\newcommand{\Fdd}{F_{DL}(\imp(D\times D))}
\newcommand{\Set}{{\bf Set}}
\def\doi{7 (2:9) 2011}
\begin{document}

\title[Finitely generated free Heyting algebras via Birkhoff duality and coalgebra]{Finitely generated free Heyting algebras
via Birkhoff duality and coalgebra}

\author[N.~Bezhanishvili]{Nick Bezhanishvili\rsuper a}	
\address{{\lsuper a}Department of Computing, Imperial College London}	
\email{nbezhani@doc.ic.ac.uk}  
\thanks{{\lsuper a}Partially supported by EPSRC EP/F032102/1 and GNSF/ST08/3-397}

\author[M.~Gehrke]{Mai Gehrke\rsuper b}	
\address{{\lsuper b}IMAPP, Radboud Universiteit Nijmegen, the Netherlands}	
\email{mgehrke@math.ru.nl}  
\thanks{{\lsuper b}Partially supported by EPSRC EP/E029329/1 and EP/F016662/1}

\keywords{Heyting algebras, intuitionistic logic, free algebras, duality, coalgebra}
\subjclass{F.4.1}

\begin{abstract}
Algebras axiomatized entirely by rank 1 axioms are algebras for a functor and 
thus the free algebras can be obtained by a direct limit process. Dually, 
the final coalgebras can be obtained by an inverse limit 
process. In order to explore the limits of this method we look at Heyting 
algebras which have mixed rank 0-1 axiomatizations. We will see that 
Heyting algebras are special in that they are almost rank 1 axiomatized and 
can be handled by a slight variant of the rank 1 coalgebraic methods.
\end{abstract}

\maketitle

\section{Introduction}

Coalgebraic methods and techniques are becoming increasingly important 
in investigating non-classical logics, e.g., \cite{yde:handbook}. In particular, 
logics axiomatized by rank 1 axioms admit coalgebraic representation as 
coalgebras for a functor \cite{AdamekTrnkova90}, \cite{KurzRos07}, 
\cite{PattSchr06}. We recall that an equation is of rank 1 for an operation $f$ 
if each variable occurring in the equation is under the scope of exactly one 
occurrence of $f$. As a result, the algebras for these logics become algebras 
for a functor over the category of underlying algebras without the operation $f$. 
Consequently, free algebras are initial algebras in the category of algebras for 
this functor. This correspondence immediately gives a constructive 
description of free algebras for rank 1 logics relative to algebras in the reduced 
type \cite{Ghilardi95}, \cite{Abramsky05}, \cite{bezh-kurz:calco07}. Examples of 
rank 1 logics 
(relative to Boolean algebras or bounded distributive lattices) are the basic modal 
logic {\bf K}, basic positive modal logic, graded modal logic, probabilistic modal logic, 
coalition logic  and so on; see, e.g., \cite{PattSchr06}. For a coalgebraic approach 
to the complexity of rank 1 logics we refer to \cite{PattSchr06}. On the other hand, 
rank 1 axioms are too simple---very few well-known logics are axiomatized by rank 
1 axioms.  Therefore,  one would want to extend the existing coalgebraic techniques 
to non-rank 1 logics. However, as follows from \cite{KurzRos07}, algebras for these 
logics cannot be represented as algebras for a functor and we cannot use the standard 
construction of free algebras in a straightforward way. 
\\[2ex]
This paper is an extended version of \cite{be-ge:calco09}. However, unlike 
\cite{be-ge:calco09}, here we give a complete solution to the problem of describing 
finitely generated free Heyting algebras in a systematic way using methods similar 
to those used for rank 1 logics. This paper together with \cite{be-ge:calco09} and 
\cite{bezh-kurz:calco07} is a facet of a larger joint project with Alexander Kurz on 
coalgebraic treatment of modal logics beyond rank 1. We recall that an equation is 
of rank 0-1 for an operation $f$ if  each  variable occurring in the equation is under 
the scope of at most one occurrence of $f$. With the ultimate goal of generalizing a 
method of constructing free algebras for varieties axiomatized by rank 1 axioms to 
the case of rank 0-1 axioms, we consider the case of Heyting algebras (intuitionistic 
logic, which is of rank 0-1 for $f=\to$). In particular, we construct free Heyting algebras. 
For an extension of coalgebraic techniques to deal with the finite model property of 
non-rank 1 logics we refer to \cite{SchrPatt08}.
\\[2ex]
Free Heyting algebras have been the subject of intensive investigation for decades. 
The one-generated free Heyting algebra was constructed in \cite{Rieger49} and 
independently in \cite{Nishimura60}. An algebraic characterization of finitely generated 
free Heyting algebras is given in \cite{Urq73}. A very detailed description of finitely 
generated free Heyting algebras in terms of their dual spaces was obtained independently 
in \cite{Shehtman78}, \cite{Grigolia87}, \cite{Bell87} and \cite{Rybakov92}. This method 
is based on a description of the points of finite depth of the dual frame of the free Heyting 
algebra. For a detailed overview of this construction we refer to \cite[Section 8.7]{CZ97} and 
\cite[Section 3.2]{NickThesis}. Finally, \cite{Ghilardi92} introduces a different method for 
describing free Heyting algebras. In \cite{Ghilardi92} the free Heyting algebra is built on a 
distributive lattice step-by-step by freely adding to the original lattice the implications of 
degree $n$, for each $n\in \omega$. \cite{Ghilardi92} uses this technique to show that 
every finitely generated free Heyting algebra is a bi-Heyting algebra. A more detailed 
account of this construction, which we call {\em Ghilardi's construction} or {\em Ghilardi's 
representation},  can be found in \cite{Butz98} and \cite{GZ02}. Based on this method, 
\cite{GZ02} derives a model-theoretic proof of Pitts' uniform interpolation theorem. In 
\cite{AGM07} a similar construction is used to describe free linear Heyting algebras over 
a finite distributive lattice and \cite{dito08} applies the same method to construct
higher order cylindric Heyting algebras. Recently, in \cite{Ghilardi10} this approach was 
extended to {\bf S4}-algebras of modal logic.
We also point out that in \cite{Ghilardi95} a systematic study is undertaken on the connection between 
constructive descriptions of free modal algebras and the theory of normal forms. 
\\[2ex]
Our contribution is to derive Ghilardi's representation of finitely generated free Heyting 
algebras in a modular way which is based entirely on the ideas of the coalgebraic 
approach to rank 1 logics, though it uses these ideas in a non-standard way. We split 
the process into two parts. We first apply the initial algebra construction to weak and 
pre-Heyting algebras---these are consecutive rank 1 approximants of Heyting algebras. 
We then use a non-standard colimit system based on the sequence of algebras for building 
free pre-Heyting algebras in the standard coalgebraic framework.
\\[2ex]
The closest approximants of the rank 1 reducts of Heyting algebras appearing in the 
literature are weak Heyting algebras introduced in  \cite{CJ05}. This is in fact why we 
first treat weak Heyting algebras. However, Heyting algebras satisfy more rank 1 axioms, 
namely at least those of what we call pre-Heyting algebras. While we identified these 
independently in this work, these were also identified by Dito Pataraia, also in connection 
with the study of Ghilardi's construction\footnote{Private communication with the first listed 
author.}. The fact that the functor for which pre-Heyting algebras are algebras yields 
Ghilardi's representation by successive application and quotienting out well-definition of 
implication is, to the best of our knowledge, new to this paper.
\\[2ex]
On the negative side, we use some properties particular to Heyting algebras, and thus 
our work does not yield a method that applies in general. Nevertheless, we expect that 
the approach, though it would have to be tailored, is likely to be successful in other
instances as well. Obtained results allow us to derive a coalgebraic representation for 
weak and pre-Heyting algebras and sheds new light on the very special nature of Heyting 
algebras.
\\[2ex]
The paper is organized as follows. In Section 2 we recall the so-called Birkhoff (discrete) 
duality for distributive lattices. We use this duality in Section 3 to build free weak Heyting 
algebras and in Section 4 to build free pre-Heyting algebras. Obtained results are applied 
in Section 5 for describing free Heyting algebras. In Section 6 we give a coalgebraic 
representation for weak and pre-Heyting algebras. We conclude the paper by listing some 
future work.

\vspace{-3mm}

\section{Discrete duality for distributive lattices}\label{sec:DLduality}

We recall that a {\em bounded distributive lattice} is a distributive lattice with the 
least and greatest elements $0$ and $1$, respectively. Throughout this paper we 
assume that all the distributive lattices are  bounded. We also recall that a non-zero 
element $a$ of a distributive lattice $D$ is called {\it join-irreducible} if for every 
$b,c\in D$ we have that $a\leq b\vee c$ implies $a\leq b$ or $a\leq c$. For each 
distributive lattice (DL for short) $D$ let $(J(D),\leq)$ denote the 
subposet of $D$ consisting of all join-irreducible elements of $D$. Recall also that 
for every poset $X$ a subset $U\subseteq X$ is called a {\it downset} if $x\in U$ and 
$y\leq x$ imply $y\in U$. For each poset $X$ we denote by $\mathcal{O}(X)$ the 
distributive lattice $(\mathcal{O}(X), \cap, \cup,\emptyset, X)$ of all downsets of $X$. 
Finite Birkhoff duality tells us that every finite distributive lattice $D$ is isomorphic to 
the lattice of all downsets of $(J(D),\leq)$ and vice versa, every finite poset $X$ is 
isomorphic to the poset of join-irreducible elements of $\mathcal{O}(X)$. We call 
$(J(D),\leq)$ the {\it dual poset} of $D$ and we call $\mathcal{O}(X)$ the {\it dual lattice} 
of $X$. 

Recall that a lattice morphism is called {\em bounded} if it preserves $0$ and $1$.  
Unless stated otherwise all the lattice morphisms will be assumed to be bounded. 
The duality between finite distributive lattices and finite posets can be extended to 
the duality between the category ${\bf DL}_\mathrm{fin}$ of finite distributive lattices and 
lattice morphisms and the category ${\bf Pos}_\mathrm{fin}$ of finite posets and 
order-preserving maps. In fact, if $h: D \to D'$ is a lattice morphism, then the restriction 
of $h^\flat$, the lower adjoint of $h$,  to $J(D')$ is an order-preserving map between 
$(J(D'), \leq')$ and $(J(D),\leq)$,  and if  $f:X \to X'$ is an order-preserving map between
two posets $X$ and $X'$, then 
$f^{\downarrow}:\mathcal{O}(X)\to \mathcal{O}(X'),\ S\mapsto\ {\downarrow}f(S)$ is 
$\bigvee$-preserving and its upper adjoint 
$(f^{\downarrow})^\sharp=f^{-1}: \mathcal{O}(X') \to \mathcal{O}(X)$ is a lattice morphism. 
Moreover, injective lattice morphisms (i.e.\ embeddings or, equivalently, regular 
monomorphisms) correspond to surjective order-preserving maps, and surjective lattice 
morphisms (homomorphic images) correspond to order embeddings (order-preserving 
and order-reflecting injective maps) that are in one-to-one correspondence with subposets 
of the corresponding poset. 
A particular case of this correspondence that will be used repeatedly throughout this paper 
is spelled out in the following proposition.

\begin{prop}
\label{prop:SubQuot}
Let $D$ be a finite distributive lattice, $X=(J(D),\leq)$ its dual poset, and $a,b\in D$. The 
least-collapsed quotient of $D$ in which $a\leq b$ is the dual lattice of the subposet of $X$ based on the set  
$\{y\in X\mid y\leq a \implies y\leq b\}$.
\end{prop}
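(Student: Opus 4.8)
The plan is to translate the statement entirely into the language of the finite Birkhoff duality recalled above and then read off the answer. Write $\hat c = \{y\in X \mid y\leq c\}$ for the downset of join-irreducibles below an element $c\in D$, so that the isomorphism $D\cong\mathcal{O}(X)$ sends $c$ to $\hat c$. Recall from the discussion preceding the proposition that homomorphic images of $D$ correspond to subposets of $X$: a subposet $Y\subseteq X$, with inclusion $\iota\colon Y\hookrightarrow X$, yields the surjective lattice morphism $q_Y = \iota^{-1}\colon \mathcal{O}(X)\to\mathcal{O}(Y)$, $S\mapsto S\cap Y$, and every quotient arises this way. Moreover, if $Y\subseteq Y'$ then $q_Y$ factors through $q_{Y'}$ via the further restriction $\mathcal{O}(Y')\to\mathcal{O}(Y)$ (indeed $(S\cap Y')\cap Y = S\cap Y$ when $Y\subseteq Y'$), so that a larger subposet gives a less-collapsed quotient.

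First I would compute, for a given subposet $Y$, exactly when $a\leq b$ holds in the quotient $\mathcal{O}(Y)$. Since $q_Y(a)=\hat a\cap Y$ and $q_Y(b)=\hat b\cap Y$, the inequality $q_Y(a)\leq q_Y(b)$ is the inclusion $\hat a\cap Y\subseteq\hat b\cap Y$, which unwinds to the condition that every $y\in Y$ with $y\leq a$ also satisfies $y\leq b$. Writing $Y_0=\{y\in X \mid y\leq a\implies y\leq b\}$, this says precisely that $Y\subseteq Y_0$.

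Combining the two observations finishes the argument: the quotients of $D$ in which $a\leq b$ holds are exactly the $q_Y$ with $Y\subseteq Y_0$, and since each such $q_Y$ factors through $q_{Y_0}$, the least-collapsed one corresponds to the largest admissible subposet, namely $Y=Y_0$ itself. Hence the least-collapsed quotient in which $a\leq b$ is $q_{Y_0}\colon\mathcal{O}(X)\to\mathcal{O}(Y_0)$, the dual lattice of the subposet on $Y_0$, as claimed. One should still check that $Y_0$ genuinely satisfies its own defining condition, so that $a\leq b$ really does hold in $\mathcal{O}(Y_0)$; but this is immediate, since by definition every $y\in Y_0$ with $y\leq a$ has $y\leq b$.

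I expect the only real subtlety to be bookkeeping about the direction of the duality: making sure that \emph{least collapsed} (smallest congruence, largest image) is matched with \emph{largest subposet}, and that the quotient map really is the restriction of downsets $S\mapsto S\cap Y$ rather than something phrased through the adjoints in the opposite direction. Once these points are pinned down from the correspondence recalled just before the proposition, no genuine computation remains.
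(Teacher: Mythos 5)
Your argument is correct and follows essentially the same route as the paper's proof: identify quotients with subposets $Y\subseteq X$ via $U\mapsto U\cap Y$, observe that $a\leq b$ holds in the quotient iff $\forall y\in Y\,(y\leq a\implies y\leq b)$, and take the largest such $Y$. You merely make explicit the factorization step (larger subposet gives less-collapsed quotient) that the paper dismisses with ``clearly.''
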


\begin{proof}
By Birkhoff duality, every quotient of $D$ corresponds to a subposet of $X$. For any subposet 
$Y\subseteq X$, the dual quotient is given by
\begin{align*}
\mathcal O(X)&\quad\to\quad \mathcal O(Y)\\
         U           &\quad\mapsto\quad U\cap Y.
\end{align*} 
For $a,b\in D$, $a$ corresponds to the downset $\widehat{a}=\{x\in X\mid x\leq a\}$ and 
similarly for $b$ and we have
\[
\widehat{a}\cap Y\subseteq \widehat{b}\cap Y\qquad\iff\qquad \forall y\in Y\quad(y\leq a\ \implies\ y\leq b).
\]
Clearly, the largest subset $Y$ for which this is true is $Y=\{y\in X\mid y\leq a \implies y\leq b\}$.
\end{proof}

We will also need the following fact.

\begin{prop}
\label{prop:dualfreeDL}
Let $X$ be a finite set and $F_{DL}(X)$ the free distributive lattice over $X$. Then the 
poset $(J(F_{DL}(X)), \leq )$ of join-irreducible elements of $F_{DL}(X)$ is isomorphic to 
$(\mathcal{P}(X),\supseteq)$, where $\mathcal{P}(X)$ is the powerset of $X$ and each 
subset $S\subseteq X$ corresponds to the conjunction $\bigwedge S\in F_{DL}(X)$. 
Moreover, for $x\in X$ and $S\subseteq X$ we have 
\begin{center}
$\bigwedge S\leq x$ iff $x\in S$. 
\end{center}
\end{prop}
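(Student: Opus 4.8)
The plan is to establish the isomorphism $(J(F_{DL}(X)),\leq)\cong(\mathcal P(X),\supseteq)$ by directly identifying the join-irreducible elements of the free distributive lattice. The concrete description of $F_{DL}(X)$ is the standard one: every element has a unique (up to the obvious normalization) disjunctive normal form, a join of meets $\bigwedge S$ for $S\subseteq X$, where we may assume the family of meets is an antichain under the divisibility/entailment order. First I would recall that the join-irreducible elements are exactly those that cannot be written as a nontrivial join, and argue that in the DNF the irreducibles are precisely the single conjunctions $\bigwedge S$. So the map $S\mapsto\bigwedge S$ is the candidate bijection $\mathcal P(X)\to J(F_{DL}(X))$.

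Next I would verify the order-reversal. The key computation is the entailment criterion for conjunctions: $\bigwedge S\leq\bigwedge T$ in $F_{DL}(X)$ if and only if $T\subseteq S$. One direction is immediate since adding more conjuncts only decreases the element; the converse uses freeness, exhibiting for each $x\in T\setminus S$ a lattice homomorphism into $\mathbbm 2$ (the two-element lattice) sending $x\mapsto 0$ and all of $S\mapsto 1$, which separates $\bigwedge S$ from $\bigwedge T$. This homomorphism-into-$\mathbbm 2$ technique is the cleanest way to prove non-entailments in a free DL, since $F_{DL}(X)$ embeds into $\mathbbm 2^{\mathcal P(X)}$ via its homomorphisms to $\mathbbm 2$, which correspond exactly to subsets of $X$ (the set of generators sent to $1$). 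From this, $\bigwedge S\leq\bigwedge T \iff T\subseteq S$ gives precisely the order-isomorphism onto $(\mathcal P(X),\supseteq)$, and in particular distinct subsets give distinct join-irreducibles, securing injectivity of the candidate map.

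For the final ``Moreover'' clause, $\bigwedge S\leq x \iff x\in S$, I would note this is just the special case $T=\{x\}$ of the entailment criterion above: $\{x\}\subseteq S$ is the same as $x\in S$. Thus it requires no separate argument once the main criterion is in place.

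The main obstacle is making rigorous the claim that the join-irreducibles are \emph{exactly} the conjunctions $\bigwedge S$ and nothing else; this rests on the uniqueness of the (irredundant) disjunctive normal form in a free distributive lattice. I would either cite the standard normal-form theorem for free DLs or argue it via the homomorphism embedding $F_{DL}(X)\hookrightarrow\mathbbm 2^{\mathcal P(X)}$: under this embedding each conjunction $\bigwedge S$ maps to the indicator of the principal up-set $\{R\mid R\supseteq S\}$ of $(\mathcal P(X),\subseteq)$, and one checks that these indicators are precisely the join-irreducibles in the sublattice they generate, so no genuinely new irreducibles arise from taking joins. Handling this cleanly — rather than waving at ``DNF'' — is where the real care is needed.
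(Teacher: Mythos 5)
Your proposal is correct and follows essentially the same route as the paper, which disposes of the statement in one line by appealing to the disjunctive normal form representation of elements of $F_{DL}(X)$; you simply supply the details that the paper leaves implicit. The entailment criterion $\bigwedge S\leq\bigwedge T\iff T\subseteq S$ via lattice homomorphisms into the two-element lattice is a sound and standard way to make the DNF argument rigorous, and the ``moreover'' clause does indeed follow as the special case $T=\{x\}$.
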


\begin{proof}
This is equivalent to the disjunctive normal form representation for elements of $F_{DL}(X)$.
\end{proof}

Finally we recall that an element $a\neq 1$ in a distributive lattice $D$ is called 
{\it meet-irreducible} provided, for every $b,c\in D$, we have that $b\wedge c\leq a$ 
implies $b\leq a$ or $c\leq a$. We let $M(D)$ denote the set of all meet-irreducible 
elements of $D$. 

\begin{prop}\label{cor1}
Let $D$ be a finite distributive lattice. Then for each $p\in J(D)$, there exists 
$\kappa(p)\in M(D)$ such that $p\nleq\kappa(p)$ and for each $a\in D$ we have 
$$
p\leq a\ \ \mbox{or}\ \ a\leq \kappa(p).
$$
\end{prop}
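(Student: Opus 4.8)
The plan is to define $\kappa(p)$ explicitly as the join of all elements of $D$ that fail to lie above $p$, that is,
$\kappa(p) = \bigvee\{a \in D : p \nleq a\}$. Since $D$ is finite this join exists, and it is the natural candidate for a ``largest element not above $p$''. I would then verify the three required properties in turn: that $p \nleq \kappa(p)$, that $\kappa(p)$ separates each $a$ from $p$ in the stated sense, and that $\kappa(p)$ is meet-irreducible.

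First, to see $p \nleq \kappa(p)$, write $\kappa(p) = a_1 \vee \cdots \vee a_n$ as a finite join of elements with $p \nleq a_i$. The join-irreducibility of $p$ (as defined in the text, i.e.\ $p \leq b \vee c$ implies $p \leq b$ or $p \leq c$) extends by an obvious induction to finite joins, so $p \leq \kappa(p)$ would force $p \leq a_i$ for some $i$, contradicting the choice of the $a_i$. Hence $p \nleq \kappa(p)$, and in particular $\kappa(p) \neq 1$ since $p \leq 1$; moreover $\kappa(p)$ is itself a member of $\{a : p \nleq a\}$, so it is genuinely the maximum of this set.

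Next, the separation property is almost immediate from the definition: given $a \in D$, if $p \nleq a$ then $a$ belongs to the set whose join is $\kappa(p)$, whence $a \leq \kappa(p)$; otherwise $p \leq a$. Thus for every $a$ either $p \leq a$ or $a \leq \kappa(p)$, as required.

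Finally, for meet-irreducibility I would argue by contraposition using the separation property just established. Suppose $b \wedge c \leq \kappa(p)$ but $b \nleq \kappa(p)$ and $c \nleq \kappa(p)$. By separation, $b \nleq \kappa(p)$ gives $p \leq b$, and likewise $p \leq c$, whence $p \leq b \wedge c \leq \kappa(p)$, contradicting $p \nleq \kappa(p)$. This shows that $b \wedge c \leq \kappa(p)$ forces $b \leq \kappa(p)$ or $c \leq \kappa(p)$, so together with $\kappa(p) \neq 1$ we conclude $\kappa(p) \in M(D)$. None of the steps presents a serious obstacle; the only real content is hitting on the correct definition of $\kappa(p)$, after which distributivity enters only through the fact that the paper's join-irreducibles are join-prime, which is exactly what makes the finite-join step in the first part legitimate.
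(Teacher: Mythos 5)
Your proposal is correct and follows essentially the same route as the paper: the same explicit definition $\kappa(p)=\bigvee\{a\in D\mid p\nleq a\}$, the same use of the (join-prime form of) join-irreducibility to get $p\nleq\kappa(p)$, and the same contrapositive argument via the separation property for meet-irreducibility. No substantive differences.
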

\begin{proof}
For $p\in J(D)$, let $\kappa(p)=\bigvee\{a\in D\mid p\nleq a\}$. Then it is clear that the 
condition involving all $a\in D$ holds. Note that if $p\leq\kappa(p)=\bigvee\{a\in D\mid p\nleq a\}$, 
then, applying the join-irreducibility of $p$, we get $a\in D$ with $p\nleq a$ but $p\leq a$, 
which is clearly a contradiction. So it is true that $p\nleq\kappa(p)$. Now we show that 
$\kappa(p)$ is meet irreducible. First note that since $p$ is not below $\kappa(p)$, the 
latter cannot be equal to $1$. Also, if $a,b\nleq\kappa(p)$ then $p\leq a,b$ and therefore 
$p\leq a\wedge b$. Thus it follows that $a\wedge b\nleq\kappa(p)$. This concludes the 
proof of the proposition. 
\end{proof}

\begin{rem}
Note that $p$ and $\kappa(p)$ satisfying the condition of Proposition~\ref{cor1} are called 
a {\em splitting} and {\em co-splitting elements of $D$}, respectively. The notions of splitting 
and co-splitting elements were introduced in lattice theory  by Whitman \cite{Whitman43}. 
It is easy to see that if for $p\in D$ there exists $\kappa'(p)\in D$ satisfying the condition of 
Proposition~\ref{cor1}, then $\kappa'(p) = \kappa(p)$. Therefore, $\kappa(p)\in D$, 
satisfying the condition of Proposition~\ref{cor1}, is unique. 
\end{rem}

\section{Weak Heyting algebras}\label{sec:wHA}

In this section we introduce weak Heyting algebras and describe  
finitely generated free weak Heyting algebras.

\subsection{Freely adding weak implications}

\begin{defi}\label{wHA}\cite{CJ05}
A pair $(A,\to)$ is called a 
{\em  weak  Heyting algebra}\footnote{In \cite{CJ05}  weak Heyting algebras are 
called `weakly Heyting algebras'.}, wHA for short, if $A$ is a distributive lattice
and $\ \to: A^2 \to A$ is a {\em weak implication}, that is, a binary operation satisfying 
the following axioms for all $a,b,c\in A$:
\begin{enumerate}[(1)]
\item  $a\to a=1$,

\item  $a\to (b\wedge c)=(a\to b)\wedge (a\to c)$.

\item  $(a\vee b)\to c = (a\to c) \wedge (b\to c)$.

\item  $(a\to b)\wedge (b\to c) \leq a\to c$.
\end{enumerate}
\end{defi}

\noindent
It is easy to see that by (2) weak implication is order-preserving in the second coordinate and 
by (3) order-reversing in the first. The following lemma gives a few useful properties of wHAs.

\begin{lem}\label{wHA:lem}
Let $(A,\to)$ be a wHA. For each $a,b\in A$ we have 
\begin{conditionsiii}
\item $a\to b =  a\to(a\wedge b)$,
\item $1\to(a\to b)\leq(1\to a)\to(1\to b)$.
\end{conditionsiii}
\end{lem}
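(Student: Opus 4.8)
The plan is to treat (i) as an immediate consequence of the two defining identities and to reserve the real work for (ii).

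For (i), I would simply instantiate axiom (2) with the second argument split as $a\wedge b$: this gives $a\to(a\wedge b)=(a\to a)\wedge(a\to b)$, and axiom (1) collapses $a\to a$ to $1$, leaving $a\to(a\wedge b)=1\wedge(a\to b)=a\to b$. No further structure is needed.

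For (ii), the strategy is to build a short chain from $1\to(a\to b)$ up to $(1\to a)\to(1\to b)$, drawing all the content from axiom (4) and relying on the monotonicity observations recorded after the definition. First I would record the transitivity instance of axiom (4) with $x=1$, $y=a$, $z=b$, namely $(1\to a)\wedge(a\to b)\leq 1\to b$. Applying order-preservation of $(1\to a)\to(-)$ in the second coordinate to this inequality yields $(1\to a)\to\bigl[(1\to a)\wedge(a\to b)\bigr]\leq(1\to a)\to(1\to b)$. The left-hand side is where the key simplification happens: by axiom (2) it factors as $\bigl[(1\to a)\to(1\to a)\bigr]\wedge\bigl[(1\to a)\to(a\to b)\bigr]$, and axiom (1) turns the first conjunct into $1$, so the whole left-hand side reduces to $(1\to a)\to(a\to b)$. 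This produces the intermediate inequality $(1\to a)\to(a\to b)\leq(1\to a)\to(1\to b)$.

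It then remains to connect $1\to(a\to b)$ to $(1\to a)\to(a\to b)$. Since $1\to a\leq 1$ trivially and weak implication is order-reversing in its first coordinate (axiom (3)), we get $1\to(a\to b)\leq(1\to a)\to(a\to b)$, and composing this with the previous inequality closes the argument. I expect the only genuinely non-obvious move to be the decision to apply $(1\to a)\to(-)$ to the transitivity instance and then re-expand via axiom (2); everything else is routine monotonicity. I would also note that this argument uses neither part (i) nor any property of $1$ beyond its being the top element.
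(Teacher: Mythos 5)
Your proposal is correct and follows essentially the same route as the paper: part (i) is the identical two-line application of axioms (2) and (1), and for part (ii) you establish the same two inequalities $1\to(a\to b)\leq(1\to a)\to(a\to b)$ (from $1\to a\leq 1$ and antitonicity in the first argument) and $(1\to a)\to(a\to b)\leq(1\to a)\to(1\to b)$ (from axiom (4) and monotonicity in the second argument). The only cosmetic difference is that where the paper invokes part (i) to rewrite $(1\to a)\to(a\to b)$ as $(1\to a)\to[(1\to a)\wedge(a\to b)]$, you re-derive that same identity inline from axioms (1) and (2) — which is just part (i) instantiated at $(1\to a)$ and $(a\to b)$.
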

\begin{proof}
(i) By axiom (2) we have $a\to(a\wedge b)=(a\to a)\wedge(a\to b)$ and by  
axiom (1) we have $a\to a=1$ so we obtain $a\to b=a\to(a\wedge b)$. 

(ii) Since the weak implication $\to$ is order-reversing in the first coordinate and  
$1\geq 1\to a$ we have $1\to(a\to b)\leq(1\to a)\to(a\to b)$. By (i) of this lemma we obtain  
$(1\to a)\to(a\to b)=(1\to a)\to[(1\to a)\wedge(a\to b)]$. Now axiom (4) yields  
$(1\to a)\wedge(a\to b)\leq 1\to b$.  So $(1\to a)\to(a\to b)\leq(1\to a)\to(1\to b)$. By 
transitivity of the order we have the desired result. 
\end{proof}            
          
\vspace{2mm}

\noindent
Let $D$ and $D'$ be distributive lattices. We let $\nto\hspace{-1mm}(D\times D)$ denote 
the set $\{a\nto b: a\in D$ and $b\in D'\}$. We stress that this is just a set bijective with 
$D\times D'$. The symbol $\nto$ is just a formal notation. For each distributive lattice $D$ 
we let $F_{DL}(\nto\hspace{-1mm}(D\times D))$ denote the free distributive lattice over 
$\nto\hspace{-1mm}(D\times D)$. Moreover, we let 
\[
\qquad\qquad H(D)=F_{DL}(\nto\hspace{-1mm}(D\times D))/_\approx
\] 
\noindent where $\approx$ is the DL congruence generated by the axioms (1)--(4) seen 
as relation schemas for $\nto$. The point of view is that of describing a distributive lattice 
by generators and relations. That is, we want to find the quotient of the free  distributive lattice
over the set  $\nto\hspace{-0.5mm}(D\times D)$ with respect to the lattice congruence generated
by the pairs of elements of $F_{DL}(\nto\hspace{-0.5mm}(D\times D))$ in (1)--(4) (where $\to$ is 
replaced by $\nto$) with $a,b,c$ ranging over $D$. For an element 
$a\nto b\in F_{DL}(\nto\hspace{-0.5mm}(D\times D))$ we denote by $[a\nto b]_\approx$ 
the $\approx$ equivalence class of $a\nto b$.

The rest of this subsection will be devoted to showing that for each finite distributive lattice $D$ the 
poset $(J(H(D)), \leq)$ is isomorphic to $(\mathcal{P}(J(D)), \subseteq)$. Below we give a dual 
proof of this fact. The dual proof,  which relies on the fact that identifying two elements of an 
algebra simply corresponds to throwing out those points of the dual that are below one and not 
the other (Proposition~\ref{prop:SubQuot}), is produced in a modular and systematic way that does 
not require any prior insight into the structure of these particular algebras.
\\[2ex]
We start with a finite distributive lattice $D$ and 
the free DL generated by the set
\[
\nto\hspace{-0.5mm}(D\times D)=\{a\nto b \mid a,b\in D\}
\]  
 \noindent
of all formal arrows over $D$. As follows from Proposition~\ref{prop:dualfreeDL}, 
$J(F_{DL}(\nto\hspace{-0.5mm}(D\times D)))$ is isomorphic to the powerset of 
$\nto\hspace{-0.5mm}(D\times D)$, ordered by reverse inclusion. Each subset  
$S \subseteq \nto\hspace{-0.5mm}(D\times D)$ corresponds to the conjunction 
$\bigwedge S$ of the elements of $S$; the empty set of course corresponds to $1$.
Now we want to take quotients of this free distributive lattice with respect to various 
lattice congruences, namely the ones generated by the set of instances of the axioms 
for weak Heyting algebras.
\\[2ex]
\noindent{\bf The relational schema $x\nto x \approx 1$}.
\\[1ex]
Here we want to take the quotient of $\Fdd$ with respect to the lattice congruence of $\Fdd$ 
generated by the set $\{(a\nto a,1)\mid a\in D\}$. By Proposition~\ref{prop:SubQuot} this quotient is given dually by the  
subposet, call it $P_1$, of our initial poset $P_0=J(\Fdd)$, consisting of those join-irreducibles of 
$\Fdd$ that do not violate this axiom. Thus, for $S\in J(\Fdd)$, $S$ is admissible provided
\[
\forall a\in D \qquad (\bigwedge S\leq 1 \quad\iff\quad \bigwedge S\leq a\nto a).
\]  
Since all join-irreducibles are less than or equal to $1$, it follows that the only join-irreducibles 
that are admissible are the ones that are below $a\nto a$ for all $a\in D$. That is,viewed as 
subsets of $\nto\hspace{-0.5mm}(D\times D)$, only the ones that contain $a\nto a$ for each $a\in D$:
\[
P_1=\{S\in P_0\mid a\nto a\in S\mbox{ for each }a\in D\}. 
\] 

\noindent{\bf The relational schema $x\nto(y\wedge z) \approx (x\nto y)\wedge(x\nto z)$}.
\\[1ex]
We now want to take a further quotient and thus we want to keep only those join-irreducibles from 
$P_1$ that do not violate this relational schema. That is, $S\in P_1$ is admissible provided
\[
\forall a,b,c\qquad (\bigwedge S \leq a\nto (b\wedge c) 
                      \quad\iff\quad \bigwedge S \leq a\nto b\ \mbox{ and }\ \bigwedge S \leq a\nto c).
\] 
\noindent
which means 
\[
\forall a,b,c\qquad (a\nto(b\wedge c)\in S 
                      \quad\iff\quad a\nto b\in S\ \mbox{ and }\ a\nto c\in S).
\] 

Let $P_2$ denote the poset of admissible join-irreducible elements of $P_1$.  

\begin{prop}
The poset $P_2$ is order isomorphic to the set
\[
Q_2=\{f:D\to D\mid \forall a\in D,  f(a)\leq a\}
\]
ordered pointwise. 
\end{prop}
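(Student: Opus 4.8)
The plan is to construct explicitly the order isomorphism between $P_2$ and $Q_2$ and to verify that it is a well-defined bijection preserving order in both directions. Recall that an element $S \in P_1$ is a subset of $\nto\hspace{-0.5mm}(D\times D)$ containing $a \nto a$ for each $a \in D$, and by Proposition~\ref{prop:dualfreeDL} the order on join-irreducibles is reverse inclusion, so $\bigwedge S \leq a \nto b$ iff $a \nto b \in S$. The admissibility condition defining $P_2$ says precisely that membership in $S$ respects meets in the second coordinate: $a \nto (b \wedge c) \in S$ iff both $a \nto b \in S$ and $a \nto c \in S$. The key observation is that such an $S$ is completely determined by, for each fixed $a \in D$, the least second coordinate $b$ with $a \nto b \in S$; this suggests defining a map $\Phi: P_2 \to Q_2$ by setting
\[
\Phi(S)(a) = \bigwedge\{b \in D \mid a \nto b \in S\}.
\]

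First I would check that $\Phi$ lands in $Q_2$. Because $D$ is finite, the meet defining $\Phi(S)(a)$ is a finite meet, and by the meet-compatibility condition the element $a \nto \Phi(S)(a)$ again lies in $S$; so $\Phi(S)(a)$ is in fact the minimum of $\{b \mid a \nto b \in S\}$, and $\{b \mid a \nto b \in S\}$ is exactly the principal upset of $\Phi(S)(a)$. To see $\Phi(S)(a) \leq a$, note $a \nto a \in S$ by the defining condition of $P_1$, so $a$ belongs to the set over which the meet is taken. Conversely I would define $\Psi: Q_2 \to P_2$ by
\[
\Psi(f) = \{a \nto b \mid f(a) \leq b\},
\]
and check that $\Psi(f)$ satisfies both the $P_1$-condition (since $f(a) \leq a$ gives $a \nto a \in \Psi(f)$) and the meet-compatibility condition (since $f(a) \leq b \wedge c$ iff $f(a) \leq b$ and $f(a) \leq c$, which is just the fact that $\wedge$ is the meet). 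Then I would verify $\Phi \circ \Psi = \mathrm{id}$ and $\Psi \circ \Phi = \mathrm{id}$; the former is immediate from $\{b \mid f(a) \leq b\}$ having least element $f(a)$, and the latter follows once we know each $S \in P_2$ is a principal upset in each coordinate, which is exactly what the meet-compatibility plus finiteness gave us above.

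Finally I would establish that $\Phi$ is an order isomorphism. The order on $P_2$ is reverse inclusion, and the order on $Q_2$ is pointwise $\leq$. The equivalence to verify is that $S \supseteq S'$ iff $\Phi(S) \leq \Phi(S')$ pointwise. Since $S$ is the set $\{a \nto b \mid \Phi(S)(a) \leq b\}$, containment $S \supseteq S'$ translates coordinatewise into the principal upset of $\Phi(S)(a)$ containing that of $\Phi(S')(a)$, i.e.\ $\Phi(S)(a) \leq \Phi(S')(a)$; this is where reverse inclusion on subsets flips into the pointwise order, matching the claim. I expect the main obstacle to be purely bookkeeping: making sure the meet-compatibility condition is used correctly to guarantee that each admissible $S$ is genuinely determined by a function (rather than by an arbitrary upward-closed set of arrows in each coordinate), and that the finiteness of $D$ is invoked to justify that the defining meet is attained. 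Once the principal-upset characterization of admissible $S$ is pinned down, the bijection and order-preservation are routine.
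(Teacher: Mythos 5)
Your proof is correct and follows essentially the same route as the paper: your $\Phi$ and $\Psi$ are exactly the paper's correspondences $S\mapsto f_S$ and $f\mapsto S_f$, and the details you supply (that meet-compatibility plus finiteness make $\{b\mid a\nto b\in S\}$ a principal upset, and that reverse inclusion on admissible sets matches the pointwise order on functions) are precisely the ``straightforward verification'' the paper leaves to the reader.
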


\begin{proof}
An admissible $S$ from $P_2$ corresponds to the function $f_S:D\to D$ given by
\[
f_S(a)=\bigwedge\{b\in D\mid a\nto b\in S\}.
\]
In the reverse direction, a function in $Q_2$ corresponds to the admissible set 
\[
S_f=\{a\nto b\mid f(a)\leq b\}.
\]
The proof that this establishes an order isomorphism is a straightforward verification.
\end{proof}
\medskip
\noindent{\bf The relational schema $(x\vee y)\nto z \approx (x\nto z)\wedge(y\nto z)$}.
\\[1ex]

We want the subposet of $Q_2$ consisting of those $f$'s such that
\[
\forall a,b,c\qquad \big((a\vee b)\nto c\in S_f 
                  \quad\iff\quad a\nto c\in S_f\ \mbox{ and }\ b\nto c\in S_f \big).
\] 
To this end notice that 
\begin{align*}
&\forall a,b,c\qquad \big((a\vee b)\nto c\in S_f 
                  \ \iff\ (a\nto c\in S_f\ \mbox{ and }\ b\nto c\in S_f)\big)\\
\iff\quad &
\forall a,b,c\qquad \big(f(a\vee b)\leq c 
                   \quad\iff\quad (f(a)\leq c\ \mbox{ and }\ f(b)\leq c)\big)\\
\iff\quad &
\forall a,b\qquad\quad  f(a\vee b)=f(a)\vee f(b).
\end{align*}
That is, the poset $P_3$ of admissible join-irreducibles left at this stage is isomorphic to  the set
\[
Q_3=\{f:D\to D\mid \ f\ \mbox{ is join-preserving and }\ \forall a\in D\quad f(a)\leq a\}.
\]

\noindent{\bf The relational schema $(x\nto y)\wedge(y\nto z) \precapprox x\nto z$}.
\\[1ex]
It is not hard to see that this yields, in terms of
join-preserving
functions  $f:D\to D$,
\begin{align*}
Q_4 &=\{f\in Q_3\mid\forall a\in D\ f(a)\leq f(f(a))\}\\
         &=\{f:D\to D\mid f\mbox{ is join-preserving and }\forall a\in D\ f(a)\leq f(f(a))\leq f(a)\leq a\}\\
         &=\{f:D\to D\mid f\mbox{ is join-preserving and }\forall a\in D\ f(f(a))= f(a)\leq a\}.
\end{align*}
We note that the elements of $Q_4$ are nuclei \cite{Joh82} on the order-dual lattice of $D$.
Since the $f$'s in $Q_4$ are join and $0$  preserving, they are completely given by their 
action on $J(D)$. The additional property shows that these functions have lots
of fixpoints. In fact, we can show that they are completely described by their join-irreducible 
fixpoints.

\begin{lem}\label{lemma3.4}
Let $f\in Q_4$, then for each $a\in D$ we have 
\[
f(a)=\bigvee\{r\in J(D)\mid f(r)=r\leq a\}.
\]
\end{lem}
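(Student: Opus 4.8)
The plan is to prove the formula $f(a)=\bigvee\{r\in J(D)\mid f(r)=r\leq a\}$ by exploiting the two structural facts we have about $f\in Q_4$: that $f$ is join-preserving (hence $0$-preserving and completely determined by its values on join-irreducibles), and that $f$ is idempotent and deflationary, i.e.\ $f(f(a))=f(a)\leq a$ for all $a$. The key observation driving the argument is that idempotence forces every value $f(a)$ to be a fixpoint of $f$, so the image of $f$ consists precisely of its fixpoints.

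\medskip
\noindent\textbf{Inequality $\geq$.} First I would show the right-hand side is below $f(a)$. If $r\in J(D)$ satisfies $f(r)=r\leq a$, then since $f$ is order-preserving (being join-preserving) and $r\leq a$, we get $r=f(r)\leq f(a)$. As this holds for every such $r$, the join of all of them is also below $f(a)$.

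\medskip
\noindent\textbf{Inequality $\leq$.} This is the substantive direction. The strategy is to write $f(a)$ as a join of join-irreducibles and show each such join-irreducible is itself a fixpoint below $a$, hence appears in the right-hand set. Since $f$ is join-preserving and $0$-preserving, and since in the finite distributive lattice $D$ we have $a=\bigvee\{q\in J(D)\mid q\leq a\}$, we obtain
\[
f(a)=f\Big(\bigvee\{q\in J(D)\mid q\leq a\}\Big)=\bigvee\{f(q)\mid q\in J(D),\ q\leq a\}.
\]
Now fix $q\in J(D)$ with $q\leq a$ and consider $f(q)$. By idempotence, $f(f(q))=f(q)$, so $f(q)$ is a fixpoint of $f$, and by the deflationary property $f(q)\leq q\leq a$. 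It remains to express $f(q)$ as a join of join-irreducible fixpoints below $a$. Writing $f(q)=\bigvee\{p\in J(D)\mid p\leq f(q)\}$ and applying $f$ again together with join-preservation and idempotence, each join-irreducible $p\leq f(q)$ contributes $f(p)\leq p$, and because $f(f(q))=f(q)$ the join of the $f(p)$ recovers $f(q)$; one then checks that the join-irreducibles actually realizing the value are fixpoints lying below $a$.

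\medskip
\noindent\textbf{Anticipated obstacle.} The delicate point is the last step: showing that $f(q)$ is a join of \emph{join-irreducible fixpoints} rather than merely a join of images $f(p)$ which need not themselves be join-irreducible. The clean way around this is to argue directly on the set of fixpoints. Let $r$ range over join-irreducible elements with $r\leq f(a)$; I would show each such $r$ is automatically a fixpoint. Indeed $r\leq f(a)$ gives $r=f(r')$ for the relevant join-irreducible decomposition, but more robustly: apply $f$ to $r\leq f(a)$ to get $f(r)\leq f(f(a))=f(a)$, and combine with the fact that $f(a)$ equals the join of its join-irreducible parts to force $f(r)=r$ for those $r$ that are genuinely needed in the decomposition. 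The cleanest formulation is to observe that the restriction of $f$ to its image is the identity (by idempotence), the image is a sublattice closed under the relevant joins, and the join-irreducibles of $f(a)$ within this image are exactly the join-irreducible fixpoints below $a$; finite distributivity then guarantees $f(a)$ is their join. I expect the bookkeeping of join-irreducibility under the image of $f$ to be where care is required, though no single step is deep.
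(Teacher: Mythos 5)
Your $\geq$ direction is fine and matches the paper's. The gap is in the substantive $\leq$ direction. The first repair you sketch --- ``let $r$ range over join-irreducible elements with $r\leq f(a)$; I would show each such $r$ is automatically a fixpoint'' --- rests on a false claim: not every join-irreducible below $f(a)$ is a fixpoint. Take $D$ the three-element chain $0<p<q$ and $f$ with $f(0)=f(p)=0$, $f(q)=q$; then $f\in Q_4$ and $p\leq q=f(q)$, but $f(p)=0\neq p$. Your fallback via the image of $f$ can be made to work, but the step you explicitly defer (``the bookkeeping of join-irreducibility under the image'') is exactly the missing content: you must show that an element of the image that is join-irreducible \emph{relative to the join-subsemilattice} $f[D]$ is join-irreducible in $D$, for otherwise the join-irreducibles of the image need not lie in $J(D)$ and the right-hand side of the lemma could be strictly smaller than $f(a)$. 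That step does hold --- if $x\in f[D]$ and $x=y\vee z$ in $D$ with $y,z<x$, then $x=f(x)=f(y)\vee f(z)$ with $f(y)\leq y<x$ and $f(z)\leq z<x$ both fixpoints, so $x$ is already join-reducible inside $f[D]$ --- but your proposal asserts the conclusion without this argument, and it is the crux of the whole lemma.

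For comparison, the paper avoids the image entirely: it picks $r$ \emph{maximal} in $J(D)$ with $r\leq f(a)$, expands $r\leq f(a)=f(f(a))=\bigvee\{f(q)\mid J(D)\ni q\leq f(a)\}$, uses join-primeness of $r$ (available since $D$ is finite distributive) to find $q\leq f(a)$ with $r\leq f(q)\leq q$, and concludes $q=r$ by maximality, whence $r\leq f(r)\leq r$, i.e.\ $f(r)=r$. Since $f(a)$ is the join of the maximal join-irreducibles below it and these are all fixpoints, the inequality follows; a final one-line check trades the condition $r\leq f(a)$ for $r\leq a$. If you prefer to keep your route, add the displayed argument about $f[D]$; note also that the final step then needs only that every element of a finite join-semilattice is a join of its join-irreducibles, not distributivity, so your approach, once completed, is in that respect slightly more general than the paper's.
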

\begin{proof}
Clearly $\bigvee\{r\in J(D)\mid f(r)=r\leq a\}\leq f(a)$. For the converse, let $r$ be
maximal in $J(D)$ with respect to the property that $r\leq f(a)$. Now it follows that
\[
r\leq f(a)=f(f(a))=\bigvee\{f(q)\mid J(D)\ni q\leq f(a)\}.
\]
Since $r$ is join-irreducible, there is $q\in J(D)$ with $q\leq f(a)$ and $r\leq f(q)$.
Thus $r\leq f(q)\leq q\leq f(a)$ and by maximality of $r$ we conclude that $q=r$.
Now $r\leq f(q)$ and $q=r$ yields $r\leq f(r)$. However, $f(r)\leq r$ as this holds for 
any element of $D$ and thus $f(r)=r$. Since any element in a finite lattice is the join 
of the maximal join-irreducibles below it, we obtain
\begin{align*}
f(a) & =\bigvee\{r\in J(D)\mid r\mbox{ is maximal in $J(D)$ with respect to }r\leq f(a)\}\\
       & \leq \bigvee\{r\in J(D)\mid f(r)=r\leq f(a)\} \ \leq \ f(a).
\end{align*}
Finally, notice that if $f(r)=r\leq f(a)$ then as $f(a)\leq a$, we have 
$f(r)=r\leq a$. Conversely, if  $f(r)=r\leq a$ then $r=f(r)=f(f(r))\leq f(a)$ and we have
proved the lemma.
\end{proof}

\begin{prop}\label{prop:jhd}
The set of functions in $Q_4$, ordered pointwise, is order isomorphic to 
the powerset of $J(D)$ in the usual inclusion order.
\end{prop}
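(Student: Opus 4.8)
The plan is to exhibit an explicit order isomorphism built on the fixpoint description already furnished by Lemma~\ref{lemma3.4}. To each $f\in Q_4$ I associate its set of join-irreducible fixpoints
\[
\Phi(f)=\{r\in J(D)\mid f(r)=r\}\subseteq J(D),
\]
and to each $T\subseteq J(D)$ I associate the function $\Psi(T)=f_T:D\to D$ defined by $f_T(a)=\bigvee\{r\in T\mid r\leq a\}$. First I would check that $\Psi$ lands in $Q_4$. That $f_T$ is $\bigvee$-preserving (hence join- and $0$-preserving) follows from the fact that a join-irreducible $r$ lies below $a\vee b$ iff it lies below $a$ or below $b$, so the defining index set splits as a union; the inequality $f_T(a)\leq a$ is immediate from the definition; and idempotence $f_T(f_T(a))=f_T(a)$ holds because every $r\in T$ with $r\leq a$ already satisfies $r\leq f_T(a)$, giving $f_T(a)\leq f_T(f_T(a))$, while the reverse inequality is the decreasing property applied at $f_T(a)$.

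Next I would verify that $\Phi$ and $\Psi$ are mutually inverse. The identity $\Psi(\Phi(f))=f$ is exactly the content of Lemma~\ref{lemma3.4}, since that lemma rewrites $f(a)$ as the join of the join-irreducible fixpoints below $a$. For $\Phi(\Psi(T))=T$, the inclusion $T\subseteq\Phi(f_T)$ holds because $r\in T$ forces $f_T(r)\geq r$ and hence $f_T(r)=r$. The reverse inclusion is where join-irreducibility does the real work: if $r\in J(D)$ satisfies $f_T(r)=r$, then $r=\bigvee\{s\in T\mid s\leq r\}$; this set is nonempty (otherwise $r=0$, contradicting join-irreducibility), and applying join-irreducibility of $r$ to this finite join yields some $s\in T$ with $r\leq s\leq r$, so $r=s\in T$.

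Finally I would observe that both maps preserve order, which upgrades the bijection to an order isomorphism. If $f\leq g$ in $Q_4$ and $f(r)=r$, then $r=f(r)\leq g(r)\leq r$ forces $g(r)=r$, so $\Phi(f)\subseteq\Phi(g)$; conversely $T\subseteq T'$ gives $f_T(a)\leq f_{T'}(a)$ pointwise directly from the definition, so $\Psi$ is monotone as well. I expect the only genuinely delicate point to be the containment $\Phi(\Psi(T))\subseteq T$, where the join-irreducibility of the fixpoint must be invoked to pull $r$ back into $T$; everything else is routine bookkeeping once Lemma~\ref{lemma3.4} is in hand.
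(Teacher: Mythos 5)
Your proof is correct and follows exactly the same route as the paper: the paper exhibits the same correspondence $f\mapsto\{p\in J(D)\mid f(p)=p\}$ with inverse $T\mapsto f_T$, cites Lemma~\ref{lemma3.4} for the inverse property, and leaves the remaining verifications as straightforward. You have simply supplied the routine details the paper omits, including the correct use of join-irreducibility (and nonzeroness) to establish $\Phi(\Psi(T))\subseteq T$.
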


\begin{proof}
The order isomorphism is given by the following one-to-one correspondence 
\begin{align*}
Q_4\ \  & \leftrightarrows \ \  {\mathcal P}(J(D))\\
  f   \ \    & \mapsto     \ \   \{p\in J(D)\mid f(p)=p\}\\
 f_T\ \   & \mapsfrom \ \  \ T  
\end{align*}
where $f_T:D\to D$ is given by $f_T(a)=\bigvee\{p\in J(D)\mid T\ni p\leq a\}$.
Using Lemma~\ref{lemma3.4}, it is straightforward to see that these two assignments are 
inverse to each other. Checking that $f_T$ is join preserving and satisfies $f^2=f\leq id_D$
is also straightforward. Finally, it is clear that $f_T\leq f_S$ if and only if $T\subseteq S$.
\end{proof}

\noindent
Next we will prove a useful lemma that will be applied often throughout the 
remainder of this paper. 
Let $D$ be a finite distributive lattice. For $a,b\in D$ and $T\subseteq J(D)$ we write 
$T\preceq a\nto b$ provided $\bigwedge S_T \leq a\nto b$, where 
$S_T = S_{f_{T}}= \{a\nto b: f_T(a)\leq b\}$.

\begin{lem}\label{imp:lem}
Let $D$ be a finite distributive lattice. For each $a,b\in D$ and $T\subseteq J(D)$ we  
have
\begin{center}
$T\preceq  a\nto b$ \ \ iff \ \ $\forall p\in T \ (p\leq a \mbox{ implies } p\leq b)$
\end{center}
\end{lem}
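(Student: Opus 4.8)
The plan is to prove the biconditional as a short chain of equivalences, unwinding the definitions of $S_T$ and $f_T$ and invoking the disjunctive-normal-form fact recorded in Proposition~\ref{prop:dualfreeDL}. The guiding observation is that $\bigwedge S_T$ is, by construction, the principal join-irreducible of the free distributive lattice $F_{DL}(\nto(D\times D))$ attached to the subset $S_T\subseteq\nto(D\times D)$, while $a\nto b$ is one of the free generators, so the inequality defining $\preceq$ is a question about a conjunction of generators lying below a single generator.

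First I would apply the ``moreover'' clause of Proposition~\ref{prop:dualfreeDL}, which says precisely that a conjunction $\bigwedge S$ lies below a generator $x$ iff $x\in S$. Instantiated here this gives $\bigwedge S_T\leq a\nto b$ iff $a\nto b\in S_T$. Next I would simply unfold definitions: since $S_T=S_{f_T}=\{c\nto d\mid f_T(c)\leq d\}$, membership $a\nto b\in S_T$ is literally the assertion $f_T(a)\leq b$. Then, using the formula $f_T(a)=\bigvee\{p\in J(D)\mid p\in T,\ p\leq a\}$ from Proposition~\ref{prop:jhd}, the inequality $f_T(a)\leq b$ rewrites as $\bigvee\{p\in T\mid p\leq a\}\leq b$. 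Finally, a join lies below $b$ iff each joinand does, which is exactly the right-hand side $\forall p\in T\ (p\leq a\implies p\leq b)$. Chaining these equivalences yields the lemma.

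I do not expect a genuine obstacle, since every step past the first is a definitional unfolding and the first step is the content of Proposition~\ref{prop:dualfreeDL}. The only point meriting care is the ambient lattice in which $\bigwedge S_T\leq a\nto b$ is read. Read in the free lattice $F_{DL}(\nto(D\times D))$ the first equivalence is immediate as above; and should one prefer to read it in one of the quotients built in this section, the same equivalence survives, because $\bigwedge S_T$ corresponds to a point $S_T$ that actually lies in the surviving subposet, so by the duality of Proposition~\ref{prop:SubQuot} the image of $\bigwedge S_T$ sits below the image of the generator $a\nto b$ iff the point $S_T$ already satisfies $S_T\leq a\nto b$ upstairs, i.e.\ iff $a\nto b\in S_T$. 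Thus the validity of this particular inequality is insensitive to the quotient, and the chain of equivalences goes through unchanged.
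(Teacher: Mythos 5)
Your chain of equivalences is exactly the paper's proof: the paper likewise passes from $\bigwedge S_T\leq a\nto b$ to $a\nto b\in S_T$ to $f_T(a)\leq b$ to $\bigvee({\downarrow}a\cap T)\leq b$ and then to the pointwise condition. Your closing remark on which ambient lattice the inequality is read in is a sensible extra precaution but does not change the argument.
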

\begin{proof}
\begin{align*}
T\preceq a\nto b &\iff  \bigwedge S_T \leq a\nto b\\
                   &\iff  a\nto b\in S_T\\
                   &\iff f_T(a)\leq b\\
                   &\iff \bigvee ({\downarrow}a\cap T)\leq b\\
                  &\iff \forall p\in T\ (p\leq a \mbox{ implies } p\leq b).
                \end{align*}
\end{proof}

This subsection culminates in the following theorem. Recall that, for a join-irredu\-cible $q$ 
of a finite distributive lattice, $\kappa(q)$ is the corresponding meet irreducible, recall 
Proposition~\ref{cor1}.

\begin{thm}
\label{thrm:JHD}
Let $D$ be a finite distributive lattice and $X=(J(D), \leq)$ its dual poset. Then the following
statements are true:
\begin{conditionsiii}
\item The poset $(J(H(D)), \leq)$ is isomorphic to the poset $(\mathcal{P}(X),\subseteq)$ 
of all subsets of $X$ ordered by inclusion.
\item $J(H(D)) = \{[\bigwedge_{q\not\in T}(q \nto \kappa(q))]_{\approx}\mid   T\subseteq X\}$.
\end{conditionsiii}
\end{thm}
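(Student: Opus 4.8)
The plan is to assemble Theorem~\ref{thrm:JHD} from the chain of posets $P_0 \supseteq P_1 \supseteq P_2 \cong Q_2$, and so on, that was built up throughout the subsection, culminating in Proposition~\ref{prop:jhd}. For part~(i), the key observation is that $H(D)$ is by definition the quotient of $\Fdd$ by the congruence generated by all four axiom schemas, and that taking this quotient corresponds dually to successively passing to the admissible subposets $P_1, P_2, P_3$ and finally to $Q_4$ (the poset underlying $P_3$ cut down by axiom~(4)). Since the congruence generated by the union of the four schemas is the join of the four individual congruences, its dual subposet is the intersection of the four admissible subposets, which is exactly $Q_4$. Thus $(J(H(D)),\leq) \cong Q_4$ as posets. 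Proposition~\ref{prop:jhd} then identifies $Q_4$ with $(\mathcal P(J(D)), \subseteq) = (\mathcal P(X), \subseteq)$, and composing these two isomorphisms proves~(i).

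For part~(ii) I first need to pin down, under the composite isomorphism of part~(i), which element of $H(D)$ corresponds to a given subset $T \subseteq X$. Tracing back through the identifications: a subset $T$ corresponds under Proposition~\ref{prop:jhd} to the function $f_T \in Q_4$, which in turn (via the isomorphism $P_2 \cong Q_2$ and the inclusions into $P_0$) corresponds to the join-irreducible $S_T = \{a \nto b \mid f_T(a) \le b\}$ of $\Fdd$, and hence to the class $[\bigwedge S_T]_{\approx}$ in $H(D)$. So the content of~(ii) is to show that $[\bigwedge S_T]_{\approx} = [\bigwedge_{q \notin T}(q \nto \kappa(q))]_{\approx}$; that is, that the (typically large) conjunction $\bigwedge S_T$ is $\approx$-equivalent to the conjunction of the single arrows $q \nto \kappa(q)$ over $q \notin T$.

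The main technical step, and the part I expect to be the real obstacle, is this comparison of the two conjunctions. The natural route is to work order-dually using Lemma~\ref{imp:lem}: two elements of $H(D)$ are equal iff they are below exactly the same join-irreducibles, equivalently iff the same subsets $T' \subseteq X$ satisfy $T' \preceq (\cdot)$. So I would show, for every $T' \subseteq X$, that $T' \preceq \bigwedge_{q \notin T}(q \nto \kappa(q))$ holds iff $T' \subseteq T$, and separately that $T' \preceq \bigwedge S_T$ holds iff $T' \subseteq T$, whence the two conjunctions are $\approx$-equal. For the first equivalence, $T' \preceq \bigwedge_{q \notin T}(q \nto \kappa(q))$ means $T' \preceq q \nto \kappa(q)$ for each $q \notin T$, which by Lemma~\ref{imp:lem} says that every $p \in T'$ with $p \le q$ also satisfies $p \le \kappa(q)$. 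Using the splitting property of Proposition~\ref{cor1} — namely $p \le q$ or $q \le \kappa(p)$, and crucially $q \not\le \kappa(q)$ — one checks that this forces $q \notin T'$ for every $q \notin T$, i.e.\ $T' \subseteq T$; conversely if $T' \subseteq T$ the condition is vacuously satisfiable by choosing such $p$ to equal $q$, which cannot occur. For the second equivalence one uses that $S_T$ is the largest admissible set whose associated function is $f_T$, together with Lemma~\ref{imp:lem} and the characterization of $f_T$ via its fixpoints from Lemma~\ref{lemma3.4}, to see that $T' \preceq \bigwedge S_T$ collapses to the same fixpoint containment $T' \subseteq T$. Getting the splitting-element bookkeeping exactly right, so that the arrows $q \nto \kappa(q)$ really do cut out precisely the complement of $T$, is where the care is needed.
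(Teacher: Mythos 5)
Your proposal is correct and follows essentially the same route as the paper: part~(i) is assembled from the chain $P_0\supseteq P_1\supseteq\dots$ and Proposition~\ref{prop:jhd}, and part~(ii) reduces to the computation $T'\preceq q\nto\kappa(q)\iff q\notin T'$ via the splitting property of Proposition~\ref{cor1}, which is exactly the paper's key step (the paper phrases it as computing the image of $\bigwedge_{q\notin T}(q\nto\kappa(q))$ under the quotient map and identifying it with the principal downset ${\downarrow}T$, which is the same calculation as your comparison of join-irreducibles below the two conjunctions).
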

\begin{proof}
As shown above, the poset $J(H(D))$, obtained from $J(F_{DL}({\nto}(D\times D)))$ by 
removing the elements that violate the relational schemas obtained from axioms (1)--(4),  is 
isomorphic to the poset $Q_4$, and $Q_4$ is in turn isomorphic to $\mathcal{P}(J(D))$ ordered 
by inclusion, see Proposition~\ref{prop:jhd}.

In order to prove the second statement, let $q\in J(D)$, and consider 
$q\nto\kappa(q)\in F_{DL}(\nto(D\times D))$. If we represent $H(D)$ as the lattice of downsets 
${\mathcal O}(J(H(D)))$, then the action of the quotient map on this element is given by
\begin{align*}
\qquad F_{DL}(\nto(D\times D))\ &\to\  H(D)\\
   q\nto\kappa(q)    \qquad         \ &\mapsto\ \{T'\in{\mathcal P}(J(D))\mid q\nto\kappa(q)\in S_{T'}\}. 
\end{align*}
Now 
\vskip-.7cm
\begin{align*}
\qquad q\nto\kappa(q)\in S_{T'} & \iff f_{T'}(q)\leq \kappa(q)\\
				   & \iff \bigvee({\downarrow}q\cap T')\leq \kappa(q)\\
				   & \iff q\not\in T'.
\end{align*}
The last equivalence follows from the fact that $a\leq\kappa(q)$ if and only if $q\nleq a$ 
and the only element of ${\downarrow}q$ that violates this is $q$ itself (Proposition~\ref{cor1}). 
We now can see that for any $T\subseteq J(D)$ we have 
\begin{align*}
\Fdd \quad &\to\quad  H(D)\\
[\bigwedge_{q\not\in T} (q \nto \kappa(q))]_{\approx} \quad 
	&\mapsto\quad \{T'\in{\mathcal P}(J(D))\mid \forall q\quad(q\not\in T\ \Rightarrow\  q\nto\kappa(q)\in S_{T'})\}\\
	&\ \quad =\{T'\in{\mathcal P}(J(D))\mid \forall q\quad(q\not\in T\ \Rightarrow\ q\not\in T')\}\\
	&\ \quad =\{T'\in{\mathcal P}(J(D))\mid \forall q\quad(q\in T'\ \Rightarrow\ q\in T)\}\\
	&\ \quad =\{T'\in{\mathcal P}(J(D))\mid T'\subseteq T\}.
\end{align*}
That is, under the quotient map $\Fdd\ \to\  H(D)$, the elements 
$\bigwedge_{q\not\in T} (q \nto \kappa(q))$ are mapped to the principal downsets ${\downarrow}T$,
for each $T\in{\mathcal P}(J(D))=J(H(D))$. Since these principal downsets are exactly the 
join-irreducibles of ${\mathcal O}(J(H(D)))=H(D)$, we have that 
$\{\ [\bigwedge_{q\not\in T}(q \nto \kappa(q))]_{\approx}\ \mid\   T\subseteq J(D)\ \}=J(H(D))$.
\end{proof}

\noindent
It follows from Theorem~\ref{thrm:JHD}(1) that if two finite distributive lattices $D$ and $D'$ have 
an equal number of join-irreducible elements, then $H(D)$ is isomorphic to $H(D')$. To see this, 
we note that if $|J(D)|=|J(D')|$, then $({\mathcal P}(J(D)), \subseteq)$ is isomorphic to 
$({\mathcal P}(J(D')), \subseteq)$. This,  by Theorem~\ref{thrm:JHD}(1), implies that $H(D)$ is 
isomorphic to $H(D')$. In particular, any two non-equivalent orders on any finite set give rise to 
two non-isomorphic distributive lattices with isomorphic $H$-images.

\begin{rem}
All the results in this section for finite distributive lattices can be generalized to the infinite case. 
In the infinite case, however, instead of  finite posets we would need to work with Priestley spaces 
and instead of the finite powerset we need to work with the Vietoris space (see Section~\ref{top:sec}).
As we will see in Sections~\ref{subsec:FreewHA}, \ref{sec:PreHA} and \ref{sec:HA}, 
for our purposes (that is, for describing finitely generated free weak Heyting, pre-Heyting and Heyting 
algebras), it suffices to work with limits of finite distributive lattices. So we will stick with the finite 
case, for now, and will consider infinite distributive lattices and Priestley spaces only in Section 6, 
where we discuss coalgebraic representation of weak Heyting and pre-Heyting algebras.
\end{rem}

\subsection{Free weak Heyting algebras}\label{subsec:FreewHA}

In the coalgebraic approach to generating the free algebra, it is a fact of central importance that $H$ 
as described here is actually a functor. That is, for a DL homomorphism $h:D\to E$ one can define a 
DL homomorphism $H(h):H(D)\to H(E)$ so that $H$ becomes a functor on the category of DLs.
To see this, we only need to note that $H$ is defined by rank 1 axioms. We recall that for an operator 
$f$ (in our case $f$ is the weak implication $\to$) an equation is of {\em rank 1}  if each variable in the 
equation is under the scope of exactly one occurrence of $f$ and an equation is of {\em rank 0-1}  
if each variable in the equation is under the scope of at most one occurrence of $f$. It is easy to check
that  axioms (1)--(4) for weak Heyting algebras are rank 1. Therefore, $H$ gives rise to a functor 
$H:{\bf DL} \to {\bf DL}$, where ${\bf DL}$ is the category of all distributive lattices and lattice morphisms. 
(This fact can be found in \cite{AdamekTrnkova90}, for $\Set$-functors, and in \cite{KurzRos07} for the 
general case.) Moreover, the category of weak Heyting algebras is isomorphic to the category $Alg(H)$ 
of the algebras for the functor $H$. For the details of such correspondences we refer to either of 
\cite{Abramsky05}, \cite{AdamekTrnkova90}, \cite{bezh-kurz:calco07}, \cite{Ghilardi95}, \cite{KurzRos07}.
We would like to give a concrete description of how $H$ applies to DL homomorphisms. We describe 
this in algebraic terms here and give the dual construction via Birkhoff duality.

Let $h:D\to E$ be a DL homomorphism. Recall that the dual map from $J(E)$ to $J(D)$ is just the 
lower adjoint $h^\flat$ with domain and codomain properly restricted. By abuse of notation we will 
just denote this map by $h^\flat$, leaving it to the reader to decide what the proper domain and 
codomain is. Now $H(D)=F_{DL}(\nto(D\times D))/\approx_D$, where $\approx_D$ is the DL congruence 
generated by the set of all instances of the axioms (1)--(4) with $a,b,c\in D$. Also let $q_D$ be the 
quotient map corresponding to quotienting out by $\approx_D$. Any lattice homomorphism 
$$h:D\to E$$ 
yields a map 
$$h\times h: D\times D\longrightarrow E\times E$$ 
and this of course yields a lattice homomorphism 
$$F_{DL}(h\times h): F_{DL}(\nto(D\times D))\longrightarrow F_{DL}(\nto(E\times E)).$$
Now the point is that  $F_{DL}(h\times h)$ carries elements of $\approx_D$ to elements of $\approx_E$
(it is an easy verification and only requires $h$ to be a homomorphism for axiom schemas (2) and (3)). 
Thus we have $\approx_D\subseteq Ker(q_E\circ F_{DL}(h\times h))$ or equivalently that there is a 
unique map $H(h):H(D)\to H(E)$ that makes the following diagram commute
\vspace*{-.2cm}
\begin{center}
$\xymatrix@M=5pt{
F_{DL}(\to(D\times D))\ \ar[rr]^{F_{DL}(h\times h)} \  \ar@{->>}[d]^{q_D} & &F_{DL}(\to(E\times E))
\ar@{->>}[d]^{q_E} \\ 
H(D) \ \ar@{-->}[rr]^{H(h)}& &H(E). }$ 
\end{center}
The dual diagram is  
\begin{center} 
$\xymatrix@M=5pt{
{\mathcal P}(D\times D)\ \ar@{<-}[rr]^{\ \ (h\times h)^{-1}} \   & &{\mathcal P}(E\times E) \\
{\mathcal P} (J(D)) \ \ar@{_(->}[u]^{e_D}\ \ar@{<--}[rr]^{{\mathcal P}(h^\flat)}& &{\mathcal P}(J(E)) 
\ar@{_(->}[u]^{e_E}} $
\end{center} 
The map $e_D:{\mathcal P} (D)\hookrightarrow  {\mathcal P}(D\times D)$ is the embedding, via 
$Q_4$ and so on into $P_0$ as obtained above. That is, 
$e_D(T)=\{a\to b\mid \forall p\in T\ (p\leq a\Rightarrow p\leq b)\}$. Now in this dual setting, the fact 
that there is a map ${\mathcal P}(h^\flat)$ is equivalent to the fact that $(h\times h)^{-1}\circ e_E$ 
maps into the image of the embedding $e_D$. This is easily verified:
\begin{align*} 
(h\times h)^{-1}( e_E(T)) & =\{a\to b\mid \forall q\in T\ (q\leq h(a)\Rightarrow q\leq h(b))\}\\
                                             & =\{a\to b\mid \forall q\in T\ (h^\flat(q)\leq a\Rightarrow h^\flat(q)\leq b)\}\\
                                             & =\{a\to b\mid \forall p\in h^\flat(T)\ (p\leq a\Rightarrow p\leq b)\}\\
                                             &=e_D(h^\flat(T)).
\end{align*}
Thus we can read off directly what the map ${\mathcal P}(h^\flat)$ is: it is just forward image under 
$h^\flat$. That is, if we call the dual of $h:D\to E$ by the name $f:J(E)\to J(D)$, then ${\mathcal P}(f)=f[\ ]$
where $f[\ ]$ is the lifted forward image mapping subsets of $J(E)$ to subsets of $J(D)$. Finally, we 
note that ${\mathcal P}(f)$ is an embedding if and only if $f$ is injective, and ${\mathcal P}(f)$ is surjective 
if and only if $f$ is surjective.

\begin{rem}
It follows from Theorem~\ref{thrm:JHD}(i) that the functor $H$ can be represented as a 
composition of two functors. Let $B: {\bf DL}_\mathrm{fin}\to {\bf BA}_\mathrm{fin}$ be the 
functor from the category of finite distributive lattices to the category of finite Boolean algebras 
which maps every finite distributive lattice to its free Boolean extension---the (unique) Boolean 
algebra generated by this distributive lattice. It is well known \cite{Nerode59} that the dual of 
the functor $B$ is the forgetful functor from the category of finite posets to the category of finite 
sets, which maps every finite poset to its underlying set. Further, let also 
$H_B: {\bf BA}_\mathrm{fin} \to {\bf DL}_\mathrm{fin}$ be the functor $H$ restricted to Boolean 
algebras. That is, given a Boolean algebra $A$ we define $H_B(A)$ as the free DL over 
$\nto\hspace{-1.5mm}(B,B)$ quotiented out by the relational schemas corresponding to the axioms 
(1)--(4) of wHAs. Then the functor which is dual to $H_B$ maps each finite set $X$ to 
$({\mathcal P}(X), \subseteq)$ and therefore $H: {\bf DL}_\mathrm{fin}\to {\bf DL}_\mathrm{fin}$ is 
the composition of $B$ with $H_B$. \end{rem}

Since weak Heyting algebras are the algebras for the functor $H$, we can make use of coalgebraic 
methods for constructing free weak Heyting algebras. Similarly to \cite{bezh-kurz:calco07}, where 
free modal algebras and free distributive modal algebras were constructed, we construct  finitely 
generated free weak Heyting algebras as initial algebras of $Alg(H)$. That is, we have a sequence of 
distributive lattices, each embedded in the next:
\begin{align*}
n\quad& \longrightarrow \ \ F_{DL}(n), \mbox{ the free  distributive lattice on $n$ generators},\\
D_0 \quad& = \quad  F_{DL}(n),\\
D_{k+1} & = \quad D_0 + H(D_k), \mbox{ where $+$ is the coproduct in ${\bf DL}$},\\
i_0\ :\ \ &D_0  \to \ D_0+H(D_0)\ =D_1 \mbox{ the embedding given by coproduct},\\ 
i_k\ :\ \ &D_k \to D_{k+1}  \mbox{ where } i_k=id_{D_0} +H(i_{k-1}).
\end{align*}
For $a,b\in D_k$, we denote by $a\to_k b$ the equivalence class $[a\nto b]_\approx\in H(D_k)
\subseteq D_{k+1}$. 
Now, by applying the technique of  \cite{Abramsky05}, \cite{AdamekTrnkova90}, \cite{bezh-kurz:calco07} 
\cite{Ghilardi95},  to weak Heyting algebras, we arrive at the following theorem.

\begin{thm}\label{freewHA}
The direct limit $(D_\omega, (D_k\to D_\omega)_k)$ in ${\bf DL}$ of the system $(D_k, i_k:D_k\to D_{k+1})_k$ 
with the binary operation $\to_\omega:D_\omega\times D_\omega \to D_\omega$ defined by  
$a \to_\omega b = a \to_k b$, for $a,b\in D_k$ is the free $n$-generated weak Heyting algebra  when we embed 
$n$ in $D_\omega$ via $n\to D_0\to D_\omega$.
\end{thm}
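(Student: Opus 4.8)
The plan is to show that $(D_\omega, \to_\omega)$ satisfies the universal property of the free $n$-generated weak Heyting algebra, following the standard initial-algebra construction for algebras over a functor. First I would verify that $(D_\omega, \to_\omega)$ is a well-defined weak Heyting algebra. The operation $\to_\omega$ is well-defined because the embeddings $i_k$ commute with the formal implications: for $a,b \in D_k$, the element $a \to_k b = [a \nto b]_\approx$ lives in $H(D_k) \subseteq D_{k+1}$, and since $i_{k+1} = id_{D_0} + H(i_k)$, the image of $a \to_k b$ in $D_{k+2}$ agrees with $i_k(a) \to_{k+1} i_k(b)$; hence the value does not depend on the stage $k$ at which $a,b$ first appear. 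That the four axioms (1)--(4) hold follows because each axiom is an instance of the relations that were quotiented out in forming $H(D_k)$: any particular instance involves only finitely many elements $a,b,c$, all of which lie in some common $D_k$, and the defining congruence $\approx$ of $H(D_k)$ forces the identity to hold already in $D_{k+1} \subseteq D_\omega$.

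Next I would establish the universal property. Let $(A, \to_A)$ be an arbitrary weak Heyting algebra and let $g : n \to A$ be any function (valuation of the $n$ generators). I would construct a unique wHA-homomorphism $\overline{g} : D_\omega \to A$ extending $g$ by defining it stage by stage and passing to the colimit. Since $D_0 = F_{DL}(n)$ is the free distributive lattice, $g$ extends uniquely to a DL-homomorphism $g_0 : D_0 \to A$. Recall that $A$, being a wHA, is an algebra for the functor $H$, so it comes equipped with a structure map $\alpha : H(A) \to A$ (namely $[a \nto b]_\approx \mapsto a \to_A b$, which is well-defined precisely because $A$ satisfies axioms (1)--(4)). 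Given a DL-homomorphism $g_k : D_k \to A$, I would define
\[
g_{k+1} = [\,g_0,\ \alpha \circ H(g_k)\,] : D_0 + H(D_k) \longrightarrow A,
\]
using the universal property of the coproduct, where the first component interprets the generators and the second interprets the implications already built. A routine check shows $g_{k+1} \circ i_k = g_k$, so the $g_k$ form a cocone over the direct system $(D_k, i_k)$ and therefore induce a unique DL-homomorphism $\overline{g} : D_\omega \to A$ with $\overline{g}\!\restriction_{D_k} = g_k$.

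It then remains to verify that $\overline{g}$ preserves the implication, i.e.\ $\overline{g}(a \to_\omega b) = \overline{g}(a) \to_A \overline{g}(b)$ for all $a,b \in D_\omega$. Any such $a,b$ lie in a common $D_k$, and by construction
\[
\overline{g}(a \to_\omega b) = g_{k+1}(a \to_k b) = \bigl(\alpha \circ H(g_k)\bigr)([a \nto b]_\approx) = g_k(a) \to_A g_k(b) = \overline{g}(a) \to_A \overline{g}(b),
\]
where the third equality unpacks the definition of $\alpha$ as the wHA-structure map of $A$. Thus $\overline{g}$ is a wHA-homomorphism extending $g$. Uniqueness follows because any wHA-homomorphism agreeing with $g$ on the generators is forced to agree with each $g_k$ on $D_k$ by induction on $k$ (the DL-structure is determined on $D_0$ and the implication values are determined by $\alpha$ at each successive stage), and the $D_k$ jointly cover $D_\omega$.

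The main obstacle, and the step deserving the most care, is checking that the $g_{k+1}$ are genuinely well-defined and compatible---in particular that $\alpha \circ H(g_k)$ respects the congruence $\approx_{D_k}$, which is exactly the content of $H$ being a functor on ${\bf DL}$ (established in the preceding subsection via the fact that axioms (1)--(4) are rank 1) together with $\alpha$ being a legitimate $H$-algebra structure on $A$. Once functoriality of $H$ and the identification of wHAs with $Alg(H)$ are in hand, the argument is the standard initial-algebra-as-colimit construction of \cite{Abramsky05}, \cite{AdamekTrnkova90}, \cite{bezh-kurz:calco07}, \cite{Ghilardi95}, and the verifications reduce to the finitary bookkeeping above.
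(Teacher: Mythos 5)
Your proposal is correct and follows exactly the route the paper intends: the paper gives no explicit proof of this theorem but appeals to the standard initial-algebra-as-colimit construction for algebras of a functor (citing \cite{Abramsky05}, \cite{AdamekTrnkova90}, \cite{bezh-kurz:calco07}, \cite{Ghilardi95}), and your argument is a faithful and accurate unfolding of that construction, including the key points that well-definedness of $\to_\omega$ rests on $i_{k+1}=id_{D_0}+H(i_k)$ and that the universal property is mediated by the $H$-algebra structure map $\alpha:H(A)\to A$.
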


\noindent
Now we will look at the dual of $(D_\omega, \to_\omega)$.  Let $X_0={\mathcal P}(n)$ be the dual of $D_0$ and let 
$$
X_{k+1} = X_0 \times \cl{P}(X_k)
$$
be the dual of $D_{k+1}$.

\begin{thm}\label{freewHAdually} The sequence $(X_k)_{k<\omega}$ with maps $\pi_k:X_0
  \times \cl{P}(X_k)\to X_k$ defined by
$$
\pi_k = id_{X_{0}}\times \mathcal{P}(\pi_{k-1})\ \ \mbox{i.e.}\ \ \pi_k(x,A)=(x,\pi_{k-1}[A])
$$
is dual to the sequence $(D_k)_{k<\omega}$ with maps $i_k: D_k \to
D_{k+1}$. In particular, the $\pi_k$'s are surjective. 
\end{thm}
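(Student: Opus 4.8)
The plan is to prove the statement by induction on $k$, establishing at each stage that the dual of the DL embedding $i_k : D_k \to D_{k+1}$ is exactly the map $\pi_k : X_{k+1} \to X_k$, and that this map is surjective. Recall from the duality section that DL embeddings dualize to surjective order-preserving maps, so surjectivity of $\pi_k$ will follow once the identification is made, provided we can show at each stage that $i_k$ is an embedding. The key external facts I would lean on are: (a) the general computation, carried out just before Theorem~\ref{freewHA}, that the functor $H$ dualizes on finite posets to the covariant powerset $\mathcal{P}$ with forward image, i.e.\ if $f : J(E) \to J(D)$ is the dual of $h : D \to E$ then the dual of $H(h)$ is $\mathcal{P}(f) = f[\ ]$; and (b) the standard duality facts that coproduct $+$ in ${\bf DL}$ dualizes to product $\times$ of posets, and that the identity and composition of DL maps dualize contravariantly.

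For the base case, $X_0 = \mathcal{P}(n)$ is by definition the dual of $D_0 = F_{DL}(n)$, using Proposition~\ref{prop:dualfreeDL}. The inductive hypothesis is that $X_k$ is dual to $D_k$; I would first check that $X_{k+1} = X_0 \times \mathcal{P}(X_k)$ is dual to $D_{k+1} = D_0 + H(D_k)$. This is immediate from the two dualization facts: the coproduct $D_0 + H(D_k)$ dualizes to the product of the dual of $D_0$, which is $X_0$, and the dual of $H(D_k)$, which by fact (a) above together with Theorem~\ref{thrm:JHD}(i) is $\mathcal{P}(X_k)$. Next I would compute the dual of $i_k = id_{D_0} + H(i_{k-1})$. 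Since dualization turns coproduct into product componentwise, the dual of $i_k$ is the product of the dual of $id_{D_0}$, which is $id_{X_0}$, and the dual of $H(i_{k-1})$. By fact (a), the dual of $H(i_{k-1})$ is $\mathcal{P}$ applied to the dual of $i_{k-1}$, and by induction the dual of $i_{k-1}$ is $\pi_{k-1}$; hence the dual of $H(i_{k-1})$ is $\mathcal{P}(\pi_{k-1}) = \pi_{k-1}[\ ]$. Assembling the two components gives exactly $\pi_k = id_{X_0} \times \mathcal{P}(\pi_{k-1})$, matching the stated formula $\pi_k(x,A) = (x, \pi_{k-1}[A])$.

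It remains to handle surjectivity, which is where I would be most careful. Surjectivity of $\pi_k$ is equivalent, by the duality between surjective order-preserving maps and DL embeddings, to $i_k$ being injective; so the two assertions of the theorem are really one inductive statement once we track injectivity alongside the identification of duals. For the base step, $i_0 : D_0 \to D_0 + H(D_0)$ is a coproduct injection in ${\bf DL}$, which is an embedding (coproduct injections of distributive lattices are injective). For the inductive step, $i_k = id_{D_0} + H(i_{k-1})$; since the coproduct functor in ${\bf DL}$ preserves embeddings and $H$ preserves embeddings (because, by fact (a), $H$ dualizes to $\mathcal{P}(f) = f[\ ]$, and $f[\ ]$ is surjective precisely when $f$ is, which corresponds to $H(h)$ being an embedding precisely when $h$ is), the hypothesis that $i_{k-1}$ is an embedding yields that $i_k$ is an embedding. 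Dualizing back, $\pi_k$ is surjective.

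The main obstacle, and the point I would treat most carefully, is the bookkeeping of variance and the preservation properties of $H$: one must be sure that $H$ genuinely preserves embeddings and that this is exactly the dual of $\mathcal{P}(f)$ being surjective when $f$ is surjective, as recorded in the discussion following the functoriality diagram. Everything else is a clean transport of the three standard dualities (coproduct/product, identity, and the functor $H$ versus $\mathcal{P}$) through the recursive definitions; the only genuine content is verifying that the embedding/surjectivity dichotomy is propagated correctly through the inductive step, and that the explicit pointwise formula $\pi_k(x,A) = (x, \pi_{k-1}[A])$ agrees with the abstract dual computed from $id_{X_0} \times \mathcal{P}(\pi_{k-1})$.
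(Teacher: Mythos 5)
Your proposal is correct and follows essentially the same route as the paper: the base case via Proposition~\ref{prop:dualfreeDL}, the identification of $X_{k+1}$ using ``sums go to products'' and the duality of $H$ with $\mathcal{P}$, and the induction on surjectivity via the fact that $\mathcal{P}(f)$ is surjective iff $f$ is. The only cosmetic difference is that you phrase the surjectivity induction on the algebra side (injectivity of $i_k$, $H$ preserving embeddings) while the paper stays on the poset side ($\pi_0$ is a coordinate projection, and $id_{X_0}\times\mathcal{P}(\pi_{k-1})$ is surjective iff $\pi_{k-1}$ is); these are the same argument under the duality.
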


\begin{proof}
The dual of $D_0$ is $X_0={\mathcal P}(n)$, and since $D_{k+1} =  D_0 + H(D_k)$, 
it follows that $X_{k+1} = X_0 \times \cl{P}(X_k)$ as sums go to products and as $H$ is
dual to $\mathcal P$. For the maps, $\pi_0: X_0\times {\mathcal P}(X_0) \to X_0$ is 
just the projection onto the first coordinate since $i_0$ is the injection given by the 
sum construction. We note that $\pi_0$ is surjective. Now the dual
$\pi_k:X_{k+1}=X_0\times {\mathcal P}(X_k)\to X_k=X_0\times {\mathcal P}(X_{k-1})$ of 
$i_k=id_{D_0}+H(i_{k-1})$ is $id_{X_0}\times {\mathcal P}(\pi_{k-1})$, which is exactly 
the map given in the statement of the theorem. Note that a map of the form $X\times Y\to
X\times Z$ given by $(x,y)\mapsto (x,f(y))$, where $f:Y\to Z$ is surjective if and only the 
map $f$ is. Also, as we saw above,  ${\mathcal P}(\pi_{k})$ is surjective if and only if 
$\pi_k$ is. Thus by induction, all the $\pi_k$'s are surjective.
\end{proof}

\section{Pre-Heyting algebras}\label{sec:PreHA}

In this section we define pre-Heyting algebras which form a subvariety of weak 
Heyting algebras and describe free pre-Heyting algebras. We first note that, for any 
weak Heyting algebra $A$, the map from $A$ to $A$ given by $a\ \mapsto\ (1\to a)$ is 
meet-preserving and also preserves $1$ by virtue of the first two axioms of weak 
Heyting algebras. For the same reason, the map from a distributive lattice $D$ to 
$H(D)$ mapping each element $a$ of $D$ to $[1\nto a]_\approx$ also is meet-preserving 
and preserves $1$. For Heyting algebras more is true: for a Heyting algebra $B$, the map 
given by $b\ \mapsto\ (1\to b)$ is just the identity map and thus, in particular, it is a lattice 
homomorphism. In other words, Heyting algebras satisfy additional rank 1 axioms beyond 
those of weak Heyting algebras.

\begin{defi}\label{QHA}
A weak Heyting algebra $(A,\to)$ is called a {\em pre-Heyting algebra}, {\em pHA} for 
short, if the following additional axioms are satisfied for all $a,b\in A$:

\begin{enumerate}[(5)]
\item  $1\to 0=0$,
\item[(6)]  $(1\to a)\vee(1\to b)=1\to(a\vee b)$.
\end{enumerate}
\end{defi}

Since these are again rank 1 axioms, we can obtain a description of the free finitely generated
pre-Heyting algebras using the same method as for weak Heyting algebras. Accordingly, 
for a finite distributive lattice $D$, similarly to what we did in the previous section, we let 
\[
\qquad\qquad K(D)=\Fdd/_\approxeq
\] 
\noindent where $\approxeq$ is the DL congruence generated by the axioms (1)--(6) viewed as 
relational schemas for $\Fdd$. This of course means we can just proceed from where we left off in  
Section~\ref{sec:wHA} and identify the further quotient of $H(D)$ obtained by the schema 
$(1\nto a)\vee(1\nto b)\approxeq 1\nto(a\vee b)$ for $a$ and $b$ ranging over the elements 
of $D$ and $1\nto 0 \approxeq 0$. That is, we need to calculate
\[
\qquad\qquad K(D)=H(D)/_\simeq
\] 
\noindent  where $\simeq$ is the DL congruence given by the relational schemas obtained from 
the axioms (5)--(6).

We say that a subset $S$ of a poset $(X,\leq)$ is {\em rooted} if there exists $p\in S$, which we call the 
{\em root} of $S$, such that $q\leq p$ for each $q\in S$. It follows from the definition that a root of a 
rooted set is unique.  Also a rooted set is necessarily non-empty.  We denote by ${\mathcal P}_r(X)$ 
the set of all rooted subsets of $(X, \leq)$. We also let 
$$
\mathit{root}: {\mathcal P}_r(X) \to X
$$
be the map sending each rooted subset $S$ of $X$ to its root. It is easy to see that $\mathit{root}$ is surjective 
and order-preserving.

\begin{thm}
\label{thrm:JKD}
Let $D$ be a finite distributive lattice and $X$ its dual poset. Then the following statements are true:
\begin{conditionsiii}
\item The poset $(J(K(D)), \leq)$ is isomorphic to the poset $(\mathcal{P}_r(X),\subseteq)$ of all rooted 
subsets of $X$ ordered by inclusion.
\item $J(K(D)) = \{[(1\nto x)\wedge(\bigwedge_{q<x,q\not\in T'}(q \nto \kappa(q))]_{\approxeq}\mid   
T'\subseteq{\downarrow}x\setminus\{x\}, x\in X\}$.
\item The map $D\rightarrow K(D)$ given by $a\mapsto[1\nto a]_\approxeq$ is an injective lattice 
homomorphism whose dual is the surjective order-preserving map $\mathit{root}:\mathcal{P}_r(X)\to X$.
\end{conditionsiii}
\end{thm}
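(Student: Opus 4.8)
The plan is to derive all three parts from the description of $H(D)$ in Theorem~\ref{thrm:JHD} together with the quotient recipe of Proposition~\ref{prop:SubQuot}, since $K(D)=H(D)/_\simeq$ with $\simeq$ generated by the schemas (5)--(6). By Theorem~\ref{thrm:JHD}(i) the join-irreducibles of $H(D)$ are the subsets $T\subseteq X$, and I will repeatedly use two pairings read off from Lemma~\ref{imp:lem}: $T\preceq 1\nto c$ iff $\bigvee T\leq c$, and $T\preceq q\nto\kappa(q)$ iff $q\notin T$ (the latter via Proposition~\ref{cor1}). I also use that the pairing extends to meets of generators conjunct-by-conjunct, and that, since each $\bigwedge S_T$ is join-irreducible, $T\preceq u\vee v$ iff $T\preceq u$ or $T\preceq v$.

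For part (i) I cut $\mathcal{P}(X)$ down by the two schemas. Axiom (5) forces only $1\nto 0\leq 0$ (the reverse holds as $0$ is least); by Proposition~\ref{prop:SubQuot} we keep the $T$ with $T\preceq 1\nto 0\Rightarrow T\preceq 0$, and since no non-empty join-irreducible lies below $0$ while $T\preceq 1\nto 0$ holds exactly when $T=\emptyset$, the only casualty is $\emptyset$. Axiom (6) forces only $1\nto(a\vee b)\leq(1\nto a)\vee(1\nto b)$ (the other inequality is automatic from monotonicity of $\nto$ in its second argument, axiom (2)); Proposition~\ref{prop:SubQuot} then keeps those $T$ with
\[
T\preceq 1\nto(a\vee b)\ \Longrightarrow\ \big(T\preceq 1\nto a\ \text{or}\ T\preceq 1\nto b\big)\qquad(a,b\in D).
\]
Rewriting via $T\preceq 1\nto c\iff\bigvee T\leq c$, this says precisely that $\bigvee T$ is join-irreducible. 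The crux---and the step I expect to be the main obstacle---is the equivalence $\bigvee T\in J(D)\iff T$ is rooted: rootedness gives $\bigvee T=\mathit{root}(T)\in X$ at once, while conversely if $r:=\bigvee T\in J(D)$ then applying join-irreducibility to $r\leq\bigvee T$ yields some $t\in T$ with $r\leq t\leq r$, so $r=t\in T$ is a root. Hence the survivors are exactly the rooted sets, and the isomorphism $(J(K(D)),\leq)\cong(\mathcal{P}_r(X),\subseteq)$ follows since the order is inherited from $\mathcal{P}(X)$.

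For part (ii) I write a rooted set as $T=\{x\}\cup T'$ with $x\in X$ and $T'\subseteq{\downarrow}x\setminus\{x\}$, and realize its join-irreducible in $K(D)=\mathcal{O}(\mathcal{P}_r(X))$ explicitly. The composite $\Fdd\to K(D)$ sends a generator to $\{T''\in\mathcal{P}_r(X)\mid T''\preceq(\,\cdot\,)\}$, so taking $e=(1\nto x)\wedge\bigwedge_{q<x,\,q\notin T'}(q\nto\kappa(q))$ and using $T''\preceq 1\nto x\iff T''\subseteq{\downarrow}x$ together with $T''\preceq q\nto\kappa(q)\iff q\notin T''$, one computes $T''\preceq e\iff T''\subseteq\{x\}\cup T'=T$. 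Thus $e$ maps to the principal downset ${\downarrow}T$, which is exactly the join-irreducible of $\mathcal{O}(\mathcal{P}_r(X))$ attached to $T$; letting $T$ range over all rooted sets yields the displayed description of $J(K(D))$.

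Finally, for part (iii), the map $\eta\colon a\mapsto[1\nto a]_\approxeq$ preserves $\wedge$ by axiom (2), preserves $\vee$ by axiom (6), and preserves $0,1$ by axioms (5) and (1), so it is a bounded lattice homomorphism. To identify its dual I evaluate the lower adjoint on a join-irreducible $j_T$ (rooted $T$ with root $x$): for all $a\in D$,
\[
\eta^{\flat}(j_T)\leq a\iff j_T\leq[1\nto a]_\approxeq\iff T\preceq 1\nto a\iff\bigvee T\leq a\iff x\leq a,
\]
whence $\eta^{\flat}(j_T)=x=\mathit{root}(T)$; that is, the dual of $\eta$ is $\mathit{root}$. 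Since $\mathit{root}$ is surjective (each $\{x\}$ is rooted with root $x$), Birkhoff duality (embeddings dualize to surjections) gives that $\eta$ is injective, completing the proof.
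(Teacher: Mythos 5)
Your proposal is correct and follows essentially the same route as the paper: reduce to Theorem~\ref{thrm:JHD}, use Lemma~\ref{imp:lem} and Proposition~\ref{prop:SubQuot} to show that schema (5) eliminates $\emptyset$ and schema (6) cuts $\mathcal{P}(X)$ down to the rooted sets, compute the join-irreducibles as images of $(1\nto x)\wedge\bigwedge_{q<x,\,q\notin T'}(q\nto\kappa(q))$, and identify the dual of $a\mapsto[1\nto a]_\approxeq$ with $\mathit{root}$ via the lower adjoint. The only cosmetic differences are that you phrase admissibility for (6) as join-irreducibility of $\bigvee T$ (the paper argues directly with a maximal non-maximum element) and compute the join-irreducibles in $K(D)$ directly rather than first in $H(D)$ and then passing to the quotient.
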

\begin{proof}
(i) By Theorem~\ref{thrm:JHD}(i), $(J(H(D)), \leq)$ is isomorphic to $(\mathcal{P}(X),\subseteq)$. Thus, we 
need to show that the rooted subsets of $X$ are exactly the subsets which are admissible with respect to 
relational schemas given by the axioms (5) and (6). For the axiom (5), it may be worth clarifying the meaning 
of this relational schema: the $0$ (and $1$) on the left side are elements of $D$, and the expression $1\nto 0$ 
is one of the generators of $\Fdd$, whereas the $0$ on the right of the equality is the bottom of the lattice 
$H(D)$ --- we will denote it by $0_{H(D)}$ for now. A  set $S\subseteq X$ is admissible for (5) provided
\[
S\preceq 1\nto 0\ \iff\ S\preceq 0_{H(D)}.
\]
Now by Lemma~\ref{imp:lem}, $S\preceq 1\nto 0$ if and only if, for all $p\in S$, $p\leq 1$ implies $p\leq 0$. 
Since the former is true for every $p\in X$ and the latter is false for all $p\in X$, the only $S\in\mathcal{P}(X)$ 
satisfying this condition is $S=\emptyset$. On the other hand, as in any lattice, no join-irreducible in $K(D)$ is 
below $0_{K(D)}$. Thus (5) eliminates $S=\emptyset$.

Weak Heyting implication is meet-preserving and thus order-preserving in the second coordinate so that we 
have that $1 \nto (a\vee b) \geq  (1\nto a) \vee (1\nto b)$ already in $H(D)$ for every $D$. Therefore, a set 
$S\subseteq X$ is admissible with respect to (5) and (6) iff $S\neq \emptyset$ and  
\[
S\preceq  1 \nto (a\vee b)\ \mbox{  implies }\ S\preceq 1\nto a\mbox{ or }S\preceq 1\nto b.
\] 
Now by Lemma~\ref{imp:lem}, $S\preceq 1 \nto x$ iff $S\subseteq {\downarrow}x$, so for non-empty $S$ we 
need $S\subseteq {\downarrow}(a\vee b)$ to imply that $S\subseteq {\downarrow}a$ or $S\subseteq {\downarrow}b$ 
for all $a,b\in D$. This is easily seen to be equivalent to rootedness: If $S\subseteq X$ is rooted and $p$ is its root, 
then $S\subseteq {\downarrow}(a\vee b)$ implies $p\leq a\vee b$. Thus, $p\leq a$ or $p\leq b$ and so 
$S\subseteq {\downarrow}a$ or  $S\subseteq {\downarrow}b$. Conversely, if $S$ is admissible then 
$S\neq\emptyset$ and, as it is finite, every element of $S$ is below a maximal element of $S$. If $p\in S$ is 
maximal but not the maximum of $S$ then $S\subseteq  {\downarrow}(p\vee a)$, where $a= \bigvee(S \setminus \{p\})$,
but $S\not\subseteq  {\downarrow}p$ and  $S\not\subseteq {\downarrow} a$.

(ii) The proof is similar to the proof of Theorem~\ref{thrm:JHD}(ii). Recall that for any $T\subseteq X$ we have 
that the join-irreducible $\{T'\in{\mathcal P}(X)\mid T'\subseteq T\}={\downarrow} T$ in ${\mathcal O}({\mathcal P}(X))$,
 which is isomorphic to $H(D)$, is equal to $[\bigwedge_{q\not\in T} (q \nto \kappa(q))]_{\approx}$. Also 
 $[1\nto x]_\approx$ is join-irreducible and corresponds to ${\downarrow} x\subseteq X$. That is,
\[
[1\nto x]_\approx=[\bigwedge_{q\nleq x} (q \nto \kappa(q))]_{\approx}.
\]
Thus, in particular, for $T\subseteq X$ rooted with root $x$ and $T'=T\setminus\{x\}$, the join-irreducible corresponding 
to $T$ is given by
\begin{align*}
[\bigwedge_{q\not\in T} (q \nto \kappa(q)]_{\approx} \quad
&= [\bigwedge_{q\nleq x} (q \nto \kappa(q))\wedge \bigwedge_{q\leq x,q\not\in T} (q \nto \kappa(q))]_{\approx}\\
&=[(1\nto x)\wedge \bigwedge_{q<x,q\not\in T'} (q \nto \kappa(q))]_{\approx}.
\end{align*}
Since this is the case in $H(D)$, it is certainly also true in the further quotient $K(D)$ and in (i) we have shown that all 
join-irreducibles of $K(D)$ correspond to rooted subsets of  $X$, thus the statement follows.

(iii) The map $D\to K(D)$ given by $a\mapsto[1\nto a]_\approxeq$ is clearly a homomorphism since we have quotiented 
out by all the necessary relations: $[1\nto 1]_{\approxeq}=1_{K(D)}$ by (1), $[1\nto 0]_{\approxeq}=0_{K(D)}$ by (5), and 
the map is meet and join preserving by (2) and (6), respectively. As we saw in Section~\ref{sec:DLduality}, the dual of a 
homomorphism between finite lattices is the restriction to join-irreducibles of its lower adjoint, that is, our homomorphism
is dual to the map $r:\mathcal{P}_r(X)\to X$ given by
\[
\forall T\in \mathcal{P}_r(X) \ \forall a\in D \quad (r(T)\leq a\quad\iff\quad T\preceq(1\nto a)).
\]
For $T\in \mathcal{P}_r(X)$ we have $T\preceq (1\nto a)=\bigvee_{x\in X, x\leq a}(1\nto x)$ if and only if there is an 
$x\in X$ with $x\leq a$ and $T\preceq (1\nto x)$. Furthermore $T\preceq (1\nto x)$ if and only if 
$T\subseteq{\downarrow x}$ if and only if $\mathit{root}(T)\leq x$. That is, $r(T)\leq a$ if and only if $\mathit{root}(T)\leq a$ so that, 
indeed, $\mathit{root}(T)=r(T)$. It is clear that the map $\mathit{root}$ is surjective and thus the dual homomorphism 
$D\hookrightarrow K(D)$ is injective.
\end{proof}

\vspace{3mm}
\noindent
The following proposition will be used to obtain some important results in the next section of this paper.

\begin{prop}\label{KDorder}\label{prop:root=x}
Let $D$ be a finite distributive lattice and $X$ its dual poset. Let $S\in \mathcal{P}_r(X)$ and let $x\in X$. 
Identifying $\mathcal{P}_r(X)$ with $J(K(D))$ we have the following equivalences
\begin{align*}
\qquad & \mathit{root}(S)=x\\
                  \iff & S\preceq 1\nto x \mbox{ but } S\not\preceq 1\nto\kappa(x)\\
                  \iff &\mbox{ it is not the case that }( S\preceq 1\nto x \implies  S\preceq 1\nto\kappa(x) ).
\end{align*}
\end{prop}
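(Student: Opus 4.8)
The plan is to reduce everything via Lemma~\ref{imp:lem} to a statement about downsets and then read off the root. First observe that the second and third displayed lines are propositionally equivalent: the third line is merely the negation of an implication, written in the form $\neg(P \Rightarrow Q) \equiv P \wedge \neg Q$. Hence it suffices to prove the equivalence of $\mathit{root}(S) = x$ with the middle line, namely ``$S \preceq 1 \nto x$ but $S \not\preceq 1 \nto \kappa(x)$''.

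Next I would translate the two $\preceq$-conditions into order-theoretic statements about $S \subseteq X$. Applying Lemma~\ref{imp:lem} with first coordinate $1$ gives $S \preceq 1 \nto x$ iff every $p \in S$ satisfies $p \leq x$, i.e.\ $S \subseteq {\downarrow}x$; similarly $S \preceq 1 \nto \kappa(x)$ iff $p \leq \kappa(x)$ for all $p \in S$. The crucial input is the characterization of $\kappa$ coming from Proposition~\ref{cor1}: for join-irreducibles one has $p \leq \kappa(x)$ precisely when $x \nleq p$ (the forward direction uses $x \nleq \kappa(x)$, the backward direction uses $\kappa(x) = \bigvee\{a \mid x \nleq a\}$). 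Consequently $S \not\preceq 1 \nto \kappa(x)$ is equivalent to the existence of some $p \in S$ with $x \leq p$.

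Combining the two translations, the middle line asserts exactly that $S \subseteq {\downarrow}x$ and that some $p \in S$ satisfies $x \leq p$. For such a $p$ we then have $x \leq p \leq x$, so $p = x$; thus $x \in S$ and, by $S \subseteq {\downarrow}x$, the element $x$ is an upper bound of $S$, i.e.\ $x = \mathit{root}(S)$. Conversely, if $\mathit{root}(S) = x$ then $x \in S$ and $q \leq x$ for all $q \in S$, which gives $S \subseteq {\downarrow}x$ and, taking $p = x$, the required element with $x \leq p$. This establishes the equivalence of the first and middle lines, and hence of all three.

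The argument is essentially routine once the translations are set up; the only point requiring genuine care is the two-sided use of the $\kappa$-characterization, since $\kappa(x)$ is a meet-irreducible of $D$ whereas the elements of $S$ are join-irreducibles, and one must keep careful track of which inequalities are evaluated in $D$ and which in the poset $X$. Because $S$ is already assumed to lie in $\mathcal{P}_r(X)$, the existence of a root is free, so the entire content lies in pinning that root down to be exactly $x$.
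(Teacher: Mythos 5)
Your proof is correct and follows essentially the same route as the paper's: both reduce the two $\preceq$-conditions via Lemma~\ref{imp:lem} to $S\subseteq{\downarrow}x$ together with the existence of some $p\in S$ with $p\nleq\kappa(x)$, invoke Proposition~\ref{cor1} to turn the latter into $x\leq p$, and conclude $p=x\in S$ is the root. The only cosmetic difference is that you package the splitting property as the biconditional $p\leq\kappa(x)\iff x\nleq p$ up front, whereas the paper uses its two directions separately in the two implications.
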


\begin{proof}
We first assume that $S\preceq 1\nto x \mbox{ and } S\not\preceq 1\nto\kappa(x)$. Then 
$S\subseteq {\downarrow}x$ and $S\not\subseteq {\downarrow}\kappa(x)$. 
It follows that for each $s\in S$ we have $s\leq x$ and there is $t\in S$ with $t\not\leq \kappa(x)$. 
Therefore, by Proposition~\ref{cor1}, we have $x\leq t$. Since $t\in S$, we obtain $t=x$.  So $x\in S$. This implies that $x$ is the root of $S$, which means that 
$\mathit{root}(S)=x$. Conversely, suppose $\mathit{root}(S)=x$. Then $S\subseteq {\downarrow}x$ and $x\in S$. So  $S\preceq 1\nto x$.
On the other hand, we know that $y\not\leq \kappa(y)$, 
for each $y\in J(D)$. Therefore, $x\not\leq \kappa(x)$ and thus $S\not\subseteq {\downarrow}\kappa(x)$. 
This implies  that $S\not\preceq 1\nto \kappa(x)$. 
Finally, it is obvious that ($S\preceq 1\nto x \mbox{ and } S\not\preceq 1\nto\kappa(x)$) is equivalent to 
($\mbox{it is not the case that }( S\preceq  1\nto x \implies  S\preceq 1\nto\kappa(x))$). This finishes
the proof of the proposition. 
\end{proof}

\noindent
Since pre-Heyting algebras are the algebras for the functor $K$, we can construct free pre-Heyting 
algebras from the functor $K$ as we constructed them for free weak Heyting algebras. Given an 
order-preserving map $f: X \to X'$ between two finite posets $X$ and $X'$ we define 
${\mathcal P}_r(f):{\mathcal P}_r(X)\to {\mathcal P}_r(X')$ by setting ${\mathcal P}_r(f) = f[\ ]$.
It is easy to see that this is the action of the functor ${\mathcal P}_r$ dual to $K$. Then we will have 
the analogues of Theorems~\ref{freewHA} and \ref{freewHAdually} for free pre-Heyting algebras. 

We consider the following  sequence of distributive lattices:
\begin{align*}
D_0 \quad& = \quad  F_{DL}(n),\\
D_{k+1} & = \quad D_0 + K(D_k),\\
i_0\ :\ \ &D_0  \to \ D_0+K(D_0)\ =D_1 \mbox{ the embedding given by coproduct},\\ 
i_k\ :\ \ &D_k \to D_{k+1}  \mbox{ where } i_k=id_{D_0} +K(i_{k-1}).
\end{align*}

\noindent Similarly to weak Heyting algebras we have the following description of free pre-Heyting algebras.

\begin{thm}\label{freeqHA}
The direct limit $(D_\omega, (D_k\to D_\omega)_k)$ in ${\bf DL}$ of the system 
$(D_k, i_k:D_k\to D_{k+1})_k$ with the binary operation $\to_\omega:D_\omega\times D_\omega \to D_\omega$ 
defined by  $a \to_\omega b = a \to_k b$, for $a,b\in D_k$ is the free $n$-generated pre-Heyting algebra.
\end{thm}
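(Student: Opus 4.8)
The plan is to mirror the construction and proof strategy already established for free weak Heyting algebras in Theorem~\ref{freewHA}, exploiting the fact that pre-Heyting algebras are precisely the algebras for the functor $K$, just as weak Heyting algebras are the algebras for $H$. First I would note that axioms (1)--(6) are all rank 1, so that $K$ is a genuine endofunctor on ${\bf DL}$ and the category of pre-Heyting algebras is isomorphic to $\Alg(K)$; this is the structural fact that licenses the entire coalgebraic machinery. Because $K$ is built from $H$ by quotienting out further rank 1 schemas, the verification that $K$ is a functor proceeds exactly as in Section~\ref{subsec:FreewHA}: a DL homomorphism $h:D\to E$ induces $F_{DL}(h\times h)$, which carries $\approxeq_D$-related pairs to $\approxeq_E$-related pairs (now checking axioms (5) and (6) in addition to (2) and (3)), yielding a unique $K(h):K(D)\to K(E)$.

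The core of the argument is then the standard initial-algebra construction. The plan is to observe that $\Alg(K)$ has an initial object for each choice of generator set, computed as the colimit of the $\omega$-chain $D_0 \hookrightarrow D_1 \hookrightarrow \cdots$ defined in the excerpt, where $D_0 = F_{DL}(n)$ and $D_{k+1} = D_0 + K(D_k)$. The maps $i_k$ are embeddings (this follows because the dual maps are surjective, exactly as in the $H$-case via Theorem~\ref{freewHAdually}, using that the dual of $K$ is the rooted-powerset functor $\mathcal{P}_r$ whose associated projections are surjective). One then forms the direct limit $(D_\omega, (D_k\to D_\omega)_k)$ in ${\bf DL}$. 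The key point, which I would invoke by citing \cite{Abramsky05}, \cite{AdamekTrnkova90}, \cite{bezh-kurz:calco07}, \cite{Ghilardi95}, is that since $K$ preserves the relevant directed colimits (being a finitary functor determined by the rank 1 presentation), the colimit $D_\omega$ carries a canonical $K$-algebra structure and is the initial $n$-generated $K$-algebra, hence the free $n$-generated pre-Heyting algebra.

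To make the operation $\to_\omega$ explicit I would define, for $a,b \in D_k$, the implication $a \to_\omega b := a \to_k b = [a\nto b]_\approxeq \in K(D_k) \subseteq D_{k+1} \subseteq D_\omega$, and check this is well-defined: if $a,b \in D_k \subseteq D_{k+1}$, then the value computed at stage $k+1$ agrees with that at stage $k$ because $i_k$ commutes with the implication operations by construction of the chain. One then verifies that $(D_\omega, \to_\omega)$ satisfies axioms (1)--(6), which holds because each relation is already imposed at every finite stage by the congruence $\approxeq$, and because the embeddings preserve these relations. The freeness, i.e.\ the universal mapping property, is then read off from initiality: any pre-Heyting algebra $A$ with a map from the $n$ generators is a $K$-algebra, and the unique $K$-algebra morphism out of the initial algebra restricts to the given assignment on generators.

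The main obstacle is the verification that the direct limit genuinely inherits a well-defined pre-Heyting structure making it initial, rather than merely a distributive lattice with a partially defined operation. Concretely, the delicate step is confirming that $\to_\omega$ is total and that the $K$-algebra structure maps assemble coherently across the colimit---that is, that $K$ commutes with the directed colimit $D_\omega = \varinjlim D_k$ so that a $K$-algebra map $K(D_\omega) \to D_\omega$ exists and agrees stagewise. Since $K$ is finitary this commutation holds, but it is precisely the point where the rank 1 nature of the axioms is essential and where the argument would fail for genuinely rank 0--1 axioms such as full Heyting implication; this is exactly the distinction the paper is setting up to exploit in the subsequent section on Heyting algebras proper.
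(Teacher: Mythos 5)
Your proposal is correct and follows essentially the same route as the paper, which proves this theorem simply by invoking the same standard initial-algebra technique used for Theorem~\ref{freewHA}: since axioms (1)--(6) are rank 1, pre-Heyting algebras are the algebras for the functor $K$, and the colimit of the chain $D_{k+1}=D_0+K(D_k)$ is the free $n$-generated algebra. Your additional remarks on functoriality of $K$, surjectivity of the dual maps via $\mathcal{P}_r$, and well-definedness of $\to_\omega$ are exactly the details the paper leaves implicit.
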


\noindent

Let
$X_0$ be the dual of $D_0$ and let 
$$
X_{k+1} = X_0 \times \cl{P}_r(X_k)
$$
be the dual of $D_{k+1}$.
\begin{thm} The sequence $(X_k)_{k<\omega}$ with maps $\pi_k:X_0
  \times \cl{P}_r(X_k)\to X_k$ defined by
$$
\pi_k = id_{X_{0}}\times \mathcal{P}_r(\pi_{k-1})\ \ \mbox{i.e.}\ \ \pi_k(x,A)=(x,\pi_{k-1}[A])
$$
is dual to the sequence $(D_k)_{k<\omega}$ with maps $i_k: D_k \to
D_{k+1}$. In particular, the $\pi_k$'s are surjective. 
\end{thm}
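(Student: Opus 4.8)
The plan is to mirror the proof of Theorem~\ref{freewHAdually}, which handled the weak Heyting case, and to adapt it by replacing the functor $\cl{P}$ with the rooted-powerset functor $\cl{P}_r$ throughout. The two ingredients needed are: first, that the dual of $D_{k+1}=D_0+K(D_k)$ is indeed $X_0\times\cl{P}_r(X_k)$; and second, that under this identification the dual of the connecting map $i_k=id_{D_0}+K(i_{k-1})$ is the stated $\pi_k=id_{X_0}\times\cl{P}_r(\pi_{k-1})$.

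For the first ingredient I would argue exactly as before: Birkhoff duality sends coproducts of distributive lattices to products of the dual posets, so the dual of $D_0+K(D_k)$ is the product of the dual of $D_0$, namely $X_0$, with the dual of $K(D_k)$. By Theorem~\ref{thrm:JKD}(i) the dual poset of $K(D_k)$ is $(\cl{P}_r(X_k),\subseteq)$, and this is precisely what makes $\cl{P}_r$ the functor dual to $K$ on finite lattices. Hence the dual of $D_{k+1}$ is $X_0\times\cl{P}_r(X_k)=X_{k+1}$, as claimed.

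For the second ingredient I would proceed by induction on $k$. The base case: since $i_0:D_0\to D_0+K(D_0)$ is the coproduct injection, its dual $\pi_0:X_0\times\cl{P}_r(X_0)\to X_0$ is the first projection, which is surjective. For the inductive step, $i_k=id_{D_0}+K(i_{k-1})$, and because the duality turns $+$ into $\times$ and turns each summand's dual map into the corresponding component, the dual of $i_k$ is $id_{X_0}\times(\text{dual of }K(i_{k-1}))$. The dual of $K(i_{k-1})$ is $\cl{P}_r(\pi_{k-1})=\pi_{k-1}[\ ]$, the lifted forward-image map, exactly as the action of $\cl{P}_r$ on morphisms was defined just before the statement (in parallel with the computation $\cl{P}(f)=f[\ ]$ carried out for $H$ in the weak Heyting case). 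This yields $\pi_k=id_{X_0}\times\cl{P}_r(\pi_{k-1})$, and unwinding the forward image gives the pointwise formula $\pi_k(x,A)=(x,\pi_{k-1}[A])$.

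Finally, surjectivity of all the $\pi_k$'s follows by induction, using the two observations already recorded in the weak Heyting proof: a map $X\times Y\to X\times Z$ of the form $(x,y)\mapsto(x,f(y))$ is surjective exactly when $f$ is, and the forward-image map $\cl{P}_r(\pi_{k-1})$ is surjective whenever $\pi_{k-1}$ is. The one point that deserves genuine care---the place I expect to be the main obstacle---is checking that the lifted forward image $\pi_{k-1}[\ ]$ actually lands in $\cl{P}_r(X_{k-1})$ and is surjective onto it; that is, that an order-preserving surjection $f$ sends rooted sets to rooted sets (the image of a rooted set is rooted with root $f(\mathit{root}(S))$, using that $f$ is order-preserving) and that every rooted subset of the codomain is hit (which uses surjectivity of $f$ to lift a root and then saturate). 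This is precisely the extra verification that distinguishes $\cl{P}_r$ from $\cl{P}$, and once it is in hand the rest of the argument is a verbatim transcription of the proof of Theorem~\ref{freewHAdually}.
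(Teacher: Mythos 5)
Your overall strategy is exactly the paper's: the paper proves this theorem with the single sentence ``the proof is analogous to the proof of the weak Heyting case,'' and your first three paragraphs are a faithful and correct expansion of that analogy (coproducts dualize to products, $K$ dualizes to $\cl{P}_r$ by Theorem~\ref{thrm:JKD}(i), and the dual of $id_{D_0}+K(i_{k-1})$ is $id_{X_0}\times\cl{P}_r(\pi_{k-1})$). You have also correctly located the one place where the analogy is not automatic. The problem is that your proposed justification of that step does not work. The claim ``$\cl{P}_r(f)$ is surjective whenever $f$ is'' is \emph{false} for a general order-preserving surjection of finite posets, so the verbatim transfer of the sentence ``$\cl{P}(\pi_k)$ is surjective if and only if $\pi_k$ is'' breaks down. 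Concretely, let $Y=\{a,b\}$ be a two-element antichain, $X=\{0<1\}$, and $f(a)=0$, $f(b)=1$; then $f$ is an order-preserving surjection, but $\cl{P}_r(Y)=\{\{a\},\{b\}\}$ has only two elements while $\cl{P}_r(X)$ has three, and the rooted set $\{0,1\}$ has no rooted preimage. Your recipe ``lift the root and saturate'' fails on this example for the same reason: after lifting the root $1$ to $b$, the saturation $f^{-1}(T)\cap{\downarrow}b=\{b\}$ maps onto $\{1\}\subsetneq T$. In general there is no guarantee that every element of a rooted set $T$ in the codomain has a preimage \emph{below} the chosen lift of $\mathit{root}(T)$.

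The statement is nevertheless true for the specific maps $\pi_k$ of this sequence, and the clean way to repair the step is to observe that each $\pi_k$ is a \emph{split} epimorphism in the category of finite posets, and that $\cl{P}_r$, being a functor, preserves split epimorphisms. Indeed, $\pi_0:X_0\times\cl{P}_r(X_0)\to X_0$ has the order-preserving section $s_0(x)=(x,{\downarrow}x)$ (note ${\downarrow}x$ is rooted with root $x$, and $x\leq y$ implies ${\downarrow}x\subseteq{\downarrow}y$), and inductively $s_k=id_{X_0}\times\cl{P}_r(s_{k-1})$ is an order-preserving section of $\pi_k=id_{X_0}\times\cl{P}_r(\pi_{k-1})$, since $\cl{P}_r(\pi_{k-1})\circ\cl{P}_r(s_{k-1})=\cl{P}_r(\pi_{k-1}\circ s_{k-1})=id$ and the forward image of a rooted set under an order-preserving map is rooted (with root the image of the root, as you note). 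This gives surjectivity of every $\pi_k$, and also of every $\cl{P}_r(\pi_{k-1})$, which is what the induction needs. With this substitution your proof is complete and otherwise matches the paper's intended argument.
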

\begin{proof}
The proof is analogues to the proof of Theorem~\ref{freewHAdually}.
\end{proof}

\section{Heyting algebras}\label{sec:HA}
In this section we will apply the technique of building free weak and pre-Heyting algebras to describe
free Heyting algebras. We recall the following definition of Heyting algebras relative to weak Heyting 
algebras. 

\begin{defi}\label{HA}\cite{Joh82}
A weak Heyting algebra $(A,\to)$ is called a {\em Heyting algebra}, {\em HA} for short, if 
the following two axioms are satisfied 
for all $a,b\in A$:

\begin{enumerate}[(7)]
\item  $b\leq a\to b$,

\item[(8)]  $a\wedge (a\to b)\leq b$.
\end{enumerate}
\end{defi}

\begin{lem}\label{FIC}
Let $(A,\to)$ be a weak Heyting algebra. Then $(A,\to)$ is a Heyting algebra iff 
for each $a\in A$ we have 
$$1\to a =a.$$
\end{lem}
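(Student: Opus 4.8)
The plan is to prove both directions directly from the axioms, with no recourse to duality; everything is a short order-theoretic computation. For the forward direction, assume $(A,\to)$ is a Heyting algebra and instantiate the two new axioms of Definition~\ref{HA} at the top element. Axiom (7), $b\leq a\to b$, taken with $a=1$ gives $b\leq 1\to b$, while axiom (8), $a\wedge(a\to b)\leq b$, taken with $a=1$ gives $1\wedge(1\to b)\leq b$, i.e.\ $1\to b\leq b$. Combining the two inequalities yields $1\to b=b$, as required.

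For the backward direction, assume $1\to a=a$ for all $a\in A$ and recover axioms (7) and (8). To get (7), I would invoke that weak implication is order-reversing in its first coordinate, which is the consequence of axiom (3) noted immediately after Definition~\ref{wHA}: since $a\leq 1$, we have $1\to b\leq a\to b$, and rewriting $1\to b=b$ on the left gives $b\leq a\to b$. To get (8), the key move is to instantiate the transitivity axiom (4), namely $(x\to y)\wedge(y\to z)\leq x\to z$, at $x=1$, $y=a$, $z=b$. This produces $(1\to a)\wedge(a\to b)\leq 1\to b$, and substituting $1\to a=a$ and $1\to b=b$ turns it into exactly $a\wedge(a\to b)\leq b$. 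Having established (7) and (8), the weak Heyting algebra is by definition a Heyting algebra.

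I expect no genuine obstacle; the whole statement reduces to one observation, that the weak-Heyting transitivity axiom (4) already encodes the modus-ponens inequality (8) the moment the operation $1\to(-)$ collapses to the identity. The only thing to spot is the correct substitution $x=1,y=a,z=b$ into (4). Everything else is a one-line verification, and I would not need Lemma~\ref{wHA:lem} for either direction.
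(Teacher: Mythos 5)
Your proof is correct and follows essentially the same route as the paper: the forward direction instantiates axioms (7) and (8) at $a=1$, and the backward direction recovers (8) by the same substitution into axiom (4). Your derivation of (7) via antitonicity in the first coordinate is just the paper's computation $b=(1\vee a)\to b=(1\to b)\wedge(a\to b)$ in disguise, since that antitonicity is exactly the noted consequence of axiom (3).
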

\begin{proof}
Let $(A,\to )$ be a weak Heyting algebra satisfying $1\to a =a $ for each $a\in A$. Then for each $a,b\in A$ we have 
$b = 1\to b = (1\vee a) \to b$. By Definition~\ref{wHA}(3), $(1\vee a) \to b = (1\to b)\wedge (a\to b) = b\wedge (a\to b)$. So 
$b\leq a\to b$ and (7) is satisfied. Now $a \wedge (a\to b) = (1\to a)\wedge (a\to b)$. By Definition~\ref{wHA}(4),
$(1\to a)\wedge (a\to b)\leq 1\to b = b$. So $a\wedge (a\to b) \leq b$ and (8) is satisfied. 
Conversely, by (7) we have that $b\leq 1\to b$, and by (8), $1\to b = 1\wedge (1\to b)\leq b$.
\end{proof}

\noindent
Let $D$ be a finite distributive lattice. We have seen how to build the free weak 
Heyting algebra and the free pre-Heyting algebra over $D$ incrementally. 
Let $F_{HA}(D)$ denote the free HA freely generated by the  distributive 
lattice $D$. Further let $F_{HA}^n(D)$ denote the elements of $F_{HA}(D)$ of
$\to$-rank less than or equal to $n$. Then each $F_{HA}^n(D)$ is a distributive
lattice and the  lattice reduct of $F_{HA}(D)$ is the direct limit (union) of the chain
\[
D=F_{HA}^0(D)\subseteq F_{HA}^1(D)\subseteq F_{HA}^2(D)\ldots
\]
and the implication on $F_{HA}(D)$ is given by the maps 
$\to:(F_{HA}^n(D))^2\to F_{HA}^{n+1}(D)$ with $(a,b)\mapsto(a\to b)$.
Further, since any Heyting algebra is a pre-Heyting algebra and the inclusion 
$F_{HA}^n(D)\subseteq F_{HA}^{n+1}(D)$ may be seen as given by the mapping 
$a\mapsto (1\to a)$, the natural maps sending generators to generators make the 
following colimit diagrams commute
$$
\xymatrix@M=5pt{
D \ar@{->>}[d]^{id} \ar@{^(->}[r]^{\!\!\!\!\!i_0}   
	& K(D)   \ar@{->>}[d]^{g_1} \ar@{^(->}[r]^{\!\!\!i_1}   
	& K(K(D)) \ar@{->>}[d]^{g_2} \ar@{^(->}[r]^{\ \ \ i_2} 
	&\ldots \\
D\ar@{^(->}[r]^{\!\!\!\!\!j_0}                           
	& F_{HA}^1(D)\ar@{^(->}[r]^{\!\!\!j_1}                                 
	&F_{HA}^2(D)\ar@{^(->}[r]^{\ \ \ j_2} 
	&\ldots 
}
$$
Notice that under the assignment  $a\mapsto (1\to a)$, the equation (7) becomes  
$1\to b\leq a\to b$ which is true in any pHA by (2), and (8) becomes
$(1\to a)\wedge(a\to b)\leq(1\to b)$ which is true in any pHA by virtue of (4). 
So these equations are already satisfied in the steps of the upper sequence.
However, an easy calculation shows that for $D$ the three-element
lattice $1\to(u\to 0)$ and \mbox{$(1\to u)\to(1\to 0)$} are not equal (where $u$ is the 
middle element) thus the implication is not well-defined on the limit of the upper 
sequence. We remedy this by taking a quotient with respect to the relational schema corresponding 
to
\begin{enumerate}
\item[{\rm(9)}]
\ \ \ \ \ \ \ \ \ \ \ \ \ \ \ \ \ \ \ \ \ \ \ \ \ \ \ \ \ \ $1\to(a\to b)=(1\to a)\to(1\to b)$
\end{enumerate}
in the second iteration of the functor $K$ and onwards. This equation, which is a particularly 
trivial family of instances of Frege's axiom
$a\to(b\to c)=(a\to b)\to(a\to c)$ is of course true in intuitionistic logic but may not be of 
much interest there. From the algebraic perspective, this relational scheme is of course
very natural as it exactly ensures that the assignment $a\mapsto\ 1\to a$ preserves 
implication. As we will see subsequently, (9) will make sure that the embedding 
of each element of the constructed sequence in the next in an $\to$-homomorphism.
We proceed, as we've done throughout this paper by identifying the dual 
correspondent of this equation.

\begin{prop}
\label{prop:KKDleq}
Let $D$ be a finite distributive lattice and $X$ its dual poset. Further, let $\theta$ be a 
congruence on $K(K(D))$. Then the following are equivalent:
\begin{conditionsiii}
\item For all $a,b\in D$ the inequality $[(1\nto a)\nto(1\nto b)]\leq [1\nto(a\nto b)]$ holds 
in $K(K(D))/\theta$;
\item For all $x,y\in X$ the inequality $[(1\nto x)\nto(1\nto\kappa(y))]\leq [1\nto(x\nto\kappa(y))]$ 
holds in $K(K(D))/\theta$
\end{conditionsiii}
\end{prop}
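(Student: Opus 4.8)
The plan is to prove the equivalence by showing that the general family of inequalities in (i) reduces to the special instances in (ii), exploiting the structure already established for $K(D)$. The forward direction (i)$\Rightarrow$(ii) is immediate: the inequalities in (ii) are simply the instances of (i) obtained by setting $a=x$ and $b=\kappa(y)$ for $x,y\in X$, so nothing needs to be done beyond observing that $x,\kappa(y)\in D$. Thus the entire content lies in the converse (ii)$\Rightarrow$(i).

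For (ii)$\Rightarrow$(i), the key idea is that every element of $D$ is determined by the join-irreducibles below it and the meet-irreducibles above it, and that the operation $1\nto(-)$ together with $\nto$ interacts well with joins and meets by the pre-Heyting axioms. First I would recall from Theorem~\ref{thrm:JKD}(iii) that $a\mapsto[1\nto a]$ is a lattice homomorphism, and that by axiom (6) together with Lemma~\ref{imp:lem} the element $1\nto a$ decomposes as $\bigvee_{x\leq a}(1\nto x)$ over join-irreducibles $x\in X$, while dually $a=\bigwedge\{\kappa(y)\mid y\nleq a\}$ expresses $a$ as a meet of co-splitting elements via Proposition~\ref{cor1}. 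The strategy is then to take an arbitrary $a,b\in D$, write $a$ as a join of join-irreducibles $x$ and $b$ as a meet of elements $\kappa(y)$, and propagate these decompositions through both sides of the inequality in (i) using the distributivity of $\nto$ over $\wedge$ in the second coordinate (axiom (2)) and over $\vee$ in the first coordinate (axiom (3)), together with the preservation properties of $1\nto(-)$. Concretely, the left-hand side $(1\nto a)\nto(1\nto b)$ should break up, using axiom (3) on the first argument and axiom (2) on the second, into a meet $\bigwedge_{x\leq a}\bigwedge_{y\nleq b}\bigl[(1\nto x)\nto(1\nto\kappa(y))\bigr]$, and symmetrically the right-hand side $1\nto(a\nto b)$ should be shown to dominate the corresponding meet $\bigwedge_{x\leq a}\bigwedge_{y\nleq b}\bigl[1\nto(x\nto\kappa(y))\bigr]$.

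The main obstacle, and the step deserving the most care, is verifying that these decompositions survive the passage to the quotient $K(K(D))/\theta$ and that the inequalities recombine correctly. In particular I must check that $1\nto(a\nto b)$ is genuinely bounded below by $\bigwedge_{x\leq a,\,y\nleq b}\bigl[1\nto(x\nto\kappa(y))\bigr]$ rather than merely related to it; this requires that $a\nto b=\bigwedge_{x\leq a}\bigwedge_{y\nleq b}(x\nto\kappa(y))$ already hold in $K(K(D))$, which follows by applying axioms (2) and (3) to the representations $a=\bigvee_{x\leq a}x$ and $b=\bigwedge_{y\nleq b}\kappa(y)$, and then applying the monotone, meet-preserving map $1\nto(-)$. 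Once both sides are expressed as the same indexed meet of the corresponding special terms, the hypothesis (ii) gives $[(1\nto x)\nto(1\nto\kappa(y))]\leq[1\nto(x\nto\kappa(y))]$ for each pair, and taking meets over all $x\leq a$, $y\nleq b$ yields the desired inequality (i).

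The one subtlety I would flag is the order-reversal of $\nto$ in its first argument: since $1\nto(-)$ is order-preserving but appears as the \emph{first} argument of the outer implication on the left-hand side, I must ensure the direction of each inequality is tracked correctly when decomposing $(1\nto a)\nto(1\nto b)$, using that $\nto$ is order-reversing in the first coordinate (from axiom (3)) and order-preserving in the second (from axiom (2)). This bookkeeping, rather than any deep structural difficulty, is where errors are most likely to creep in, so I would carry out the decomposition of each side as a separate explicit computation before combining them.
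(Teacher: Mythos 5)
Your proposal is correct and follows essentially the same route as the paper: decompose $1\nto a$ as $\bigvee_{x\leq a}(1\nto x)$ and $1\nto b$ as $\bigwedge_{y\nleq b}(1\nto\kappa(y))$, use axioms (3) and (2) to turn $(1\nto a)\nto(1\nto b)$ into the meet $\bigwedge_{x\leq a,\,y\nleq b}\bigl((1\nto x)\nto(1\nto\kappa(y))\bigr)$, apply (ii) termwise, and recombine via the meet-preservation of $1\nto(-)$ and the identity $a\nto b=\bigwedge_{x\leq a,\,y\nleq b}(x\nto\kappa(y))$. The only cosmetic difference is that the paper's chain consists of equalities at every step except the single application of (ii), so the order-reversal bookkeeping you flag is absorbed into axiom (3) holding as an equality.
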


\begin{proof}
(i) implies (ii) is clear since (ii) is a special case of (i). We prove that (ii) implies (i) (leaving out the 
brackets $[\ ]$ of the quotient map).
\begin{align*}
(1\nto a)\nto(1\nto b)&=(\bigvee_{X\ni x\leq a}(1\nto x))\nto(\bigwedge_{X\ni y\nleq b}(1\nto\kappa(y)))\\
                           &=\bigwedge_{\substack{X\ni x\leq a\\X\ni y\nleq b}}((1\nto x)\nto(1\nto\kappa(y)))\\
                           &\leq \bigwedge_{\substack{X\ni x\leq a\\X\ni y\nleq b}}(1\nto(x\nto\kappa(y)))\\
                           &= 1\nto\bigwedge_{\substack{X\ni x\leq a\\X\ni y\nleq b}}(x\nto\kappa(y))\\
                           &= 1\nto(a\nto b)).
\end{align*}
\end{proof}
We are now ready to translate this into a dual property which we will call $(G)$ after Ghilardi 
who introduced it in \cite{Ghilardi92}.

\begin{prop}\label{VP:cor}
Let $X_0$ be a finite poset, $X_1$ a subposet of ${\mathcal P}_r(X_0)$, and $X_2$ a subposet of 
${\mathcal P}_r(X_1)$. For $i=1,2$, let $(\star)_i$ be the condition
$$
x\in T\in X_i\ \implies\ T_x=({\downarrow}x\cap T)\in X_i
$$
If the conditions $(\star)_1$ and $(\star)_2$ both hold then the following are equivalent:
\begin{conditionsiii}
\item $\forall x,y\in X_0$ the inequalities 
\[ 
[(1\nto x)\nto(1\nto\kappa(y))]\leq[1\nto(x\nto\kappa(y))]\mbox{ hold in }{\mathcal O}(X_2);
\] 

\item $\forall \tau\in X_2 \  \forall T\in \tau\ \forall S\in X_1$
\[
(S\leq T\ \implies\ \exists T'\in\tau\ (T'\leq T\mbox{ and } \mathit{root}(S)=\mathit{root}(T')). \ \ \ \ (G)
\]
\end{conditionsiii}
\end{prop}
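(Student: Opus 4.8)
The plan is to unwind both conditions into statements purely about the posets $X_0, X_1, X_2$ and the $\mathit{root}$ map, using the translation machinery already established. The key observation driving everything is Proposition~\ref{prop:root=x}: the inequality $[(1\nto x)\nto(1\nto\kappa(y))]\leq[1\nto(x\nto\kappa(y))]$ in ${\mathcal O}(X_2)$ can be tested pointwise on join-irreducibles $\tau\in X_2$, and membership of a join-irreducible in a principal downset $[1\nto z]$ is governed by whether $\mathit{root}$ equals $z$. So first I would rewrite (i) as: for all $x,y\in X_0$ and all $\tau\in X_2$, if $\tau$ lies below the left-hand side then it lies below the right-hand side. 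Concretely, $\tau\preceq [1\nto(x\nto\kappa(y))]$ means (by Lemma~\ref{imp:lem} and Theorem~\ref{thrm:JKD}) that $\mathit{root}(\tau)\preceq (x\nto\kappa(y))$ as an element of $X_1\subseteq{\mathcal P}_r(X_0)$, which in turn means $\mathit{root}(\mathit{root}(\tau))\leq x \implies \mathit{root}(\mathit{root}(\tau))\leq\kappa(y)$, i.e.\ by Proposition~\ref{cor1}, $\mathit{root}(\mathit{root}(\tau))\leq x$ forces $\mathit{root}(\mathit{root}(\tau))\neq y$.

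Next I would expand the left-hand side similarly. Since $\nto$ distributes over the relevant joins and meets (as used in the computation in Proposition~\ref{prop:KKDleq}), $\tau\preceq [(1\nto x)\nto(1\nto\kappa(y))]$ unwinds, again via Lemma~\ref{imp:lem}, to a condition of the form: for every $T\in\tau$, if $T\preceq (1\nto x)$ then $T\preceq (1\nto\kappa(y))$; and using Proposition~\ref{prop:root=x} this becomes a statement about $\mathit{root}(T)$ relative to $x$ and $y$ for each $T$ in the rooted set $\tau$. The conditions $(\star)_1$ and $(\star)_2$ are exactly what is needed here: they guarantee that the truncations $T_x=({\downarrow}x\cap T)$ and the corresponding truncation at the level of $X_2$ stay inside the subposets, so that when I witness membership I can always pass to the relevant sub-rooted-set rather than leaving $X_1$ or $X_2$. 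This is where $(\star)$ earns its keep: without it the witnessing elements produced below need not be legal points of $X_1,X_2$.

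The heart of the argument is to show that, granting $(\star)_1,(\star)_2$, the pointwise inequality in (i) is equivalent to the combinatorial splitting condition $(G)$. I would prove both directions by contraposition, since $(G)$ is naturally an existence statement. For $(i)\Rightarrow(ii)$: given $\tau$, $T\in\tau$ and $S\in X_1$ with $S\leq T$, I would instantiate the inequality (i) at the pair $x=\mathit{root}(T)$ and $y=\mathit{root}(S)$ (or a suitable choice forcing the antecedent), read off that $\tau$ sits below the left-hand side using $S\leq T$ and the $(\star)$-truncations to manufacture the required witness $T$-behaviour, and then extract from $\tau$ being below the right-hand side the existence of $T'\in\tau$ with $T'\leq T$ and $\mathit{root}(S)=\mathit{root}(T')$. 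For $(ii)\Rightarrow(i)$: assuming $(G)$, I would take any $\tau$ below the left-hand side and any $x,y$, and show $\mathit{root}(\mathit{root}(\tau))\leq x\Rightarrow\mathit{root}(\mathit{root}(\tau))\not\leq y$ by feeding the witness $S$ supplied by the left-hand-side condition into $(G)$ to produce $T'$, whose root matches $\mathit{root}(S)$, thereby certifying membership in the right-hand downset.

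The main obstacle I anticipate is bookkeeping the two-level nesting cleanly: $\tau$ is a rooted subset of $X_1$, whose elements $T$ are themselves rooted subsets of $X_0$, so every assertion involves iterated $\mathit{root}$'s and iterated downset-truncations, and it is easy to conflate the level-1 rootedness (coming from $X_1\subseteq{\mathcal P}_r(X_0)$) with the level-2 rootedness of $\tau$. I would keep these straight by fixing notation once at the outset and checking at each witnessing step that the element I produce genuinely lies in the correct subposet, invoking $(\star)_1$ at the $X_1$-level and $(\star)_2$ at the $X_2$-level precisely when I replace an element by its truncation. The distributivity computation from Proposition~\ref{prop:KKDleq} can be quoted rather than redone, so the real work is purely the order-theoretic equivalence between the pointwise inequalities and $(G)$, and the care lies in the quantifier alternation on the existence of $T'$.
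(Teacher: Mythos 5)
Your overall strategy---unwind both sides via Lemma~\ref{imp:lem} and Proposition~\ref{prop:root=x}, and use $(\star)_1,(\star)_2$ to keep truncations inside $X_1,X_2$---is the same as the paper's, but your translation of condition (i) contains an error that destroys the content of the statement. You claim that $\tau\preceq 1\nto(x\nto\kappa(y))$ reduces to a condition on $\mathit{root}(\mathit{root}(\tau))$ alone. It does not: $\tau\preceq 1\nto c$ is equivalent to $T\preceq c$ for \emph{every} $T\in\tau$, and $T\preceq x\nto\kappa(y)$ is, by Lemma~\ref{imp:lem}, the statement that \emph{every} $z\in T$ with $z\leq x$ satisfies $z\leq\kappa(y)$ --- a universally quantified condition over all elements of $T$, not just its root. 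By contrast, the left-hand side $\tau\preceq(1\nto x)\nto(1\nto\kappa(y))$ genuinely does reduce to a condition on the roots $\mathit{root}(T)$ for $T\in\tau$. The entire point of the proposition is that condition $(G)$ is exactly what closes the gap between ``a condition on roots of members of $\tau$'' and ``a condition on all elements of members of $\tau$''; with your flattened reading of the right-hand side both sides become statements about a single point and the equivalence with $(G)$ is neither provable nor true. (A smaller slip of the same kind: $z\leq\kappa(y)$ means $y\nleq z$, not $z\neq y$.)

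The concrete steps that carry the proof are also missing. For (i)$\Rightarrow$(ii) the correct instantiation is $x=y=\mathit{root}(S)$ (not $x=\mathit{root}(T)$, $y=\mathit{root}(S)$): assuming no $T'\leq T$ in $\tau$ has root $x$, one passes to $\tau_T={\downarrow}T\cap\tau$ (legal by $(\star)_2$), observes via Proposition~\ref{prop:root=x} that $\tau_T\preceq(1\nto x)\nto(1\nto\kappa(x))$, applies (i) to get $\tau_T\preceq 1\nto(x\nto\kappa(x))$, and then --- using the \emph{correct} unwinding --- concludes $x\notin T'$ for all $T'\in\tau_T$, in particular $x\notin T$, contradicting $x=\mathit{root}(S)\in S\subseteq T$. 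Your sketch never produces this contradiction, and it is unclear how your proposed instantiation would force the antecedent of (i). For (ii)$\Rightarrow$(i) the argument must handle an arbitrary $z\in T$ with $z\leq x$: one forms $T_z={\downarrow}z\cap T\in X_1$ by $(\star)_1$, feeds $T_z\leq T$ into $(G)$ to obtain $T'\in\tau$ with $\mathit{root}(T')=z$ and $T'\leq T$, and uses the left-hand hypothesis on $T'$ to conclude $z\leq\kappa(y)$. Your version only tracks $\mathit{root}(\mathit{root}(\tau))$ and so cannot establish (i) for the other elements $z$. These are not bookkeeping issues but the substance of the proof.
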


\begin{proof}
First we prove that (i) implies (ii). To this end suppose (i) holds and let 
$T\in\tau\in X_2$ and $S\in X_1$. Suppose that for all $T'\in\tau$ either 
$T'\nleq T$ or $\mathit{root}(S)\neq \mathit{root}(T')$. Now consider $\tau_T={\downarrow}T\cap\tau$.  
By $(\star)_2$ we have that $\tau_T\in X_2$.
Then we have $\mathit{root}(S)\neq \mathit{root}(T')$ for all 
$T'\in\tau_T$. Thus, letting $x=\mathit{root}(S)$ and using the observation in 
Proposition~\ref{KDorder}, we obtain 
\begin{align*}
&\forall T'\in\tau_T\qquad\qquad \mathit{root}(T')\neq x\\
\iff\ &\forall T'\in\tau_T\quad (T'\preceq 1\nto x\ \implies\ T'\preceq 1\nto\kappa(x))\\
\iff\ &  \tau_T\preceq (1\nto x)\nto(1\nto\kappa(x))\\
\implies\ &  \tau_T\preceq 1\nto(x\nto\kappa(x))\\
\iff\ & \forall T'\in\tau_T\qquad T'\preceq x\nto\kappa(x)\\
\iff\ & \forall T'\in\tau_T\ \forall y\in T'\quad(y\leq x\ \implies\ y\leq\kappa(x))\\
\iff\ & \forall T'\in\tau_T\qquad x\not\in T'\\
\implies\ & \qquad x\not\in T.
\end{align*}
The two implications come from the fact that we assume that (i) holds and 
because, in particular, $T\in\tau_T$. Now we have $x=\mathit{root}(S)\in S$
but $x\not\in T$ so $S\nleq T$ and we have proved (ii) by contraposition.

Now suppose (ii) holds, let $x,y\in X_0$, and let $\tau\in X_2$ such that 
$\tau\preceq (1\nto x)\nto(1\nto\kappa(y))$. Then
\begin{align*}
&\tau\preceq (1\nto x)\nto(1\nto\kappa(y))\\
\iff\ &\forall T\in\tau\quad (T\preceq 1\nto x\ \implies\ T\preceq 1\nto\kappa(y))\\
\iff\ &\forall T\in\tau\quad (T\leq{\downarrow}x\ \implies\ T\leq{\downarrow}\kappa(y)).
\end{align*}
We want to show that $T\preceq x\nto\kappa(y)$ for each $T\in\tau$. That is, that for all $z\in T$
we have $z\leq x$ implies $z\leq\kappa(y)$. So let $z\in T$ with $z\leq x$. 
By  $(\star)_1$ we have that $ T_z = {\downarrow}z\cap T\in X_1$.
Since $T_z\leq T$ it follows by (ii) that
\[
\exists T'\in\tau\qquad (T'\leq T\mbox{ and } z=\mathit{root}(T_z)=\mathit{root}(T')).
\]
Now $x\geq z=\mathit{root}(T_z)=\mathit{root}(T')$ implies that $T'\leq{\downarrow}x$ and thus we have 
$T'\leq{\downarrow}\kappa(y)$. In particular, $z=\mathit{root}(T')\leq\kappa(y)$. That is, we 
have shown that for all $z\in T$, if $z\leq x$ then $z\leq\kappa(y)$ as required. 
\end{proof}

Our strategy in building the free $n$-generated Heyting algebra will be to start with $D$, the free 
$n$-generated distributive lattice, embed it in $K(D)$, and then this in a quotient of $K(K(D))$
obtained by quotienting out by $1\nto(a\nto b)=(1\nto a)\nto(1\nto b)$ for $a,b\in D$. For the further 
iterations of $K$ this identification is iterated. The following is the general situation that we need 
to consider, viewed dually: 
$$
\xymatrix@M=5pt{
X_0\ \ar@{<<-}[r]^{\!\!\!\!\!\!\mathit{root}} \ &{\mathcal P}_r(X_0)\ \ar@{<<-}[r]^{\mathit{root}} \   & {\mathcal P}_r({\mathcal P}_r(X_0)) \\
                                                 &X_1\ \ar@{_(->}[u]  \ \ar@{<<-}[r]^{\mathit{root}}        & {\mathcal P}_r(X_1) \ar@{_(->}[u]\\
                                                 &                                                                       & X_2          \ar@{_(->}[u]
}
$$
 We now consider the following sequence of finite posets
\begin{align*}
X_0 & =J(F_{DL}(n))(={\mathcal P}(n))\\
X_1 & ={\mathcal P}_r(X_0)\\
\hspace{-.6cm}\mbox{for }n\geq 1\quad X_{n+1} & =\{\tau\in{\mathcal P}_r(X_n)\mid 
             \forall T\in \tau\ \forall S\in X_n\ \\
       &\qquad\quad (S\leq T\ \implies\ \exists T'\in\tau\ (T'\leq T\mbox{ and } \mathit{root}(S)=\mathit{root}(T'))\}.
\end{align*}
We denote by $\nabla$ the sequence 
$$
\xymatrix@M=5pt{
X_0\ \ar@{<-}[r]^{\mathit{root}} \ & X_1\ \ar@{<-}[r]^{\mathit{root}} \   & X_2 \ \dots
                                              }
$$
For $n\geq 1$, we say that $\nabla$ satisfies $(\star)_n$ if 
$$
x\in T\in X_n\ \implies\ T_x=({\downarrow}x\cap T)\in X_n.
$$

\begin{lem}\label{Ro:lem}
$\nabla$ satisfies $(\star)_n$ for each $n\geq 1$ and the root maps
$\mathit{root}: X_{n+1}\to X_n$ are surjective for each $n\geq 0$.
\end{lem}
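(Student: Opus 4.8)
The plan is to prove the two assertions separately, and in each case to reduce everything to a direct verification of the defining condition $(G)$. I would proceed index by index rather than by a substantive induction: for $(\star)_n$ the step for $n\geq 2$ turns out to use only that the set in question satisfies $(G)$, not any statement about smaller indices, and the surjectivity claim is verified pointwise for each target element. The whole argument rests on one simple observation, namely that the existential witnesses demanded by $(G)$ behave well under the two set-operations that occur, restriction below a fixed element and formation of a principal downset.

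First I would establish $(\star)_n$. The base case $n=1$ is immediate, since $X_1=\mathcal{P}_r(X_0)$ is simply all rooted subsets: if $x\in T\in X_1$ then $T_x={\downarrow}x\cap T$ contains $x$ and lies entirely below $x$, so it is rooted with root $x$ and again belongs to $\mathcal{P}_r(X_0)=X_1$. For $n\geq 2$, $X_n$ is cut out of $\mathcal{P}_r(X_{n-1})$ by $(G)$, so for $x\in\tau\in X_n$ I would show that $\tau_x={\downarrow}x\cap\tau$ still satisfies $(G)$. It is rooted with root $x$, hence lies in $\mathcal{P}_r(X_{n-1})$; and to check $(G)$, take any $V\in\tau_x$ and $S\in X_{n-1}$ with $S\leq V$. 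Since $V\in\tau$ and $\tau$ satisfies $(G)$, there is a witness $V'\in\tau$ with $V'\leq V$ and $\mathit{root}(S)=\mathit{root}(V')$; the point is that $V'\leq V\leq x$ forces $V'\in\tau_x$, so the very same $V'$ witnesses $(G)$ for $\tau_x$. This descent of the witness is the only step that needs a moment's thought.

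For the surjectivity of $\mathit{root}:X_{n+1}\to X_n$, the case $n=0$ is trivial since every singleton $\{x\}$ is a rooted subset of $X_0$ with root $x$. For $n\geq 1$ I would produce, for an arbitrary $T\in X_n$, the explicit preimage given by the full principal downset $\tau={\downarrow}T=\{V\in X_n\mid V\leq T\}$, which is rooted with root $T$, so that $\mathit{root}(\tau)=T$. The only thing to verify is that $\tau\in X_{n+1}$, i.e.\ that $\tau$ satisfies $(G)$: given $V\in\tau$ and $S\in X_n$ with $S\leq V$, I can simply take the witness $V'=S$, since $S\leq V\leq T$ puts $S$ in $\tau$ while the root condition $\mathit{root}(S)=\mathit{root}(S)$ is vacuous.

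The main obstacle -- really the only genuine choice in the proof -- lies in the surjectivity argument: the naive candidate, the singleton $\{T\}$, fails $(G)$ as soon as some $S\leq T$ has a root different from $\mathit{root}(T)$, so one is forced to enlarge it to the entire principal downset in order to have these self-witnesses available. Once the right candidates are fixed, both parts collapse to the single observation that the witnesses guaranteed by $(G)$ are preserved under passing to the part below a fixed element, and I expect no further difficulties.
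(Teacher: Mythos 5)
Your proof is correct and follows essentially the same route as the paper: $(\star)_1$ holds because $X_1$ is all rooted subsets, $(\star)_n$ for $n\geq 2$ follows by noting that the witness $V'$ supplied by $(G)$ for $\tau$ descends into $\tau_x$ since $V'\leq V\leq x$, and surjectivity is witnessed by the principal downset ${\downarrow}T$, whose $(G)$-condition is satisfied by taking $T'=S$ itself. The only cosmetic differences are that you handle $n=0$ surjectivity separately via singletons (the downset argument covers it uniformly) and you add a helpful remark on why the singleton $\{T\}$ would not serve as a preimage.
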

\begin{proof}
$X_1$ consists of all rooted subsets of $X_0$ and thus $(\star)_1$ is clearly 
satisfied. Now let $n\geq 2$. We assume that $T\in \tau\in X_{n}$ and we show that 
$\tau_T = {\downarrow} T\cap \tau$ also belongs to $X_{n}$. So let $U\in \tau_T$, 
$S\in X_n$ and $S\leq U$. Then since $U\in \tau$, there exists $U'\in \tau$ such that 
$U'\leq U$ and $\mathit{root}(S)=\mathit{root}(U')$. But $U\in \tau_T$ implies that $U\leq T$. Therefore 
we have $U'\leq T$ and so $U'\in \tau_T$. Thus, $\tau_T\in X_{n}$ and $\nabla$ 
satisfies $(\star)_n$, for each $n\geq 1$.
Finally, we show that all the $\mathit{root}$ maps are surjective. To see this, assume $U\in X_n$. We 
show that ${\downarrow}U\in X_{n+1}$. Suppose $T\in {\downarrow}U$ and for some 
$S\in X_n$ we have $S\leq T$. Then $S\in {\downarrow}U$ and by setting $T'=S$ we 
easily satisfy the condition $(G)$. Finally, note that $\mathit{root}({\downarrow}U) = U$ and thus 
$\mathit{root}: X_{n+1}\to X_n$ is surjective. 
\end{proof}

Let $\Delta$ be the system
$$
\xymatrix@M=5pt{
D_0\ \ar@{^(->}[r]^{i_0} \ & D_1\ \ar@{^(->}[r]^{i_1} \   & D_2 \ \dots
                                              }
$$
of distributive lattices dual to $\nabla$. For each $n\geq0$, $i_n:D_n\to D_{n+1}$,  
is a lattice homomorphism dual to $\mathit{root}$. By Theorem~\ref{thrm:JKD}(iii) 
$i_n(a)= [1\nto a]_\approxeq$, for $a\in D_n$. By Lemma~\ref{Ro:lem}, $\mathit{root}$ is 
surjective, so each $i_n$ is injective. Each $X_{n+1}\subseteq {\mathcal P}_r(X_n)$ 
so that each $D_{n+1}$ is a quotient of $K(D_n)$ and thus, for each $n$, we also have 
implication operations:
\begin{align*}
\to_n:\ D_n\times D_n &\to \ D_{n+1}\\
		   (a,b)\ \   &\mapsto \ [a\nto b].
\end{align*}
Here $[a\nto b]$ is the equivalence class of $a\nto b$ as an element in $D_{n+1}$.
Let $D_\omega$ be the limit of $\Delta$ in the category of distributive lattices then 
$D_\omega$ is naturally turned into a Heyting algebra.

\begin{lem}\label{Vo:lem}
The operations $\to_n$ can be extended to an operation $\to_\omega$ on $D_\omega$
and the algebra $(D_\omega, \to_\omega)$ is a Heyting algebra.
\end{lem}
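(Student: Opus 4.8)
The plan is to define $\to_\omega$ stagewise and then reduce the entire statement to two facts established earlier: that the comparison maps are given by $i_n(a)=[1\nto a]$ (Theorem~\ref{thrm:JKD}(iii)), and that axiom (9) is forced to hold at every stage by the definition of $\nabla$. Since every $i_n$ is injective, $D_\omega=\bigcup_n D_n$ with each $D_n$ embedded in $D_\omega$. Given $\alpha,\beta\in D_\omega$ I would pick $n$ with $\alpha,\beta\in D_n$, say $\alpha=a$, $\beta=b$, and set $\alpha\to_\omega\beta$ to be the image in $D_\omega$ of $\to_n(a,b)=[a\nto b]\in D_{n+1}$. The first thing to verify is independence of the chosen stage, and by a routine induction it suffices to compare stage $n$ with stage $n+1$: I must show that the $D_\omega$-images of $[a\nto b]\in D_{n+1}$ and of $\to_{n+1}(i_n(a),i_n(b))=[(1\nto a)\nto(1\nto b)]\in D_{n+2}$ agree. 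Pushing the former up along $i_{n+1}$ and using $i_{n+1}(c)=[1\nto c]$, this reduces to the single identity
\begin{equation}
[(1\nto a)\nto(1\nto b)]=[1\nto(a\nto b)]\quad\text{in }D_{n+2},\qquad a,b\in D_n.\tag{$\ast$}
\end{equation}

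Identity $(\ast)$ is the heart of the matter and I expect it to be the main obstacle. One inequality, $[1\nto(a\nto b)]\leq[(1\nto a)\nto(1\nto b)]$, holds in every weak Heyting algebra, hence in $D_{n+2}$, by Lemma~\ref{wHA:lem}(ii). For the reverse inequality I would run the chain of equivalences assembled above. By Lemma~\ref{Ro:lem}, $\nabla$ satisfies $(\star)_{n+1}$ and $(\star)_{n+2}$, and by its very definition $X_{n+2}$ satisfies condition $(G)$ relative to $X_{n+1}$ and $X_n$. Applying Proposition~\ref{VP:cor} to the triple $(X_n,X_{n+1},X_{n+2})$, condition $(G)$ is equivalent to the inequalities $[(1\nto x)\nto(1\nto\kappa(y))]\leq[1\nto(x\nto\kappa(y))]$ holding in $\mathcal{O}(X_{n+2})=D_{n+2}$ for all $x,y\in X_n$. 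Since $X_{n+2}\subseteq{\mathcal P}_r(X_{n+1})\subseteq{\mathcal P}_r({\mathcal P}_r(X_n))$, the lattice $D_{n+2}$ is a quotient of $K(K(D_n))$, so Proposition~\ref{prop:KKDleq} (with $D=D_n$) upgrades these special instances, involving only the meet-irreducibles $\kappa(y)$, to the full family $[(1\nto a)\nto(1\nto b)]\leq[1\nto(a\nto b)]$ for all $a,b\in D_n$. Combining this with the easy inequality gives $(\ast)$, so $\to_\omega$ is well defined; by construction it restricts to $\to_n$ on each $D_n$.

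It then remains to identify the resulting algebra, and this part I expect to be routine. First I would check that $(D_\omega,\to_\omega)$ is a weak Heyting algebra by localisation: any instance of axioms (1)--(4) involves only finitely many elements, which all lie in a common $D_n$, and every implication occurring in such an instance lands in $D_{n+1}$; as $D_{n+1}$ is a quotient of $K(D_n)$, these rank~1 identities already hold in $D_{n+1}$, hence in $D_\omega$. Finally, for any $a\in D_\omega$ with representative $a\in D_n$ one computes $1\to_\omega a=\to_n(1,a)=[1\nto a]=i_n(a)$ by Theorem~\ref{thrm:JKD}(iii), and $i_n(a)$ is identified with $a$ in the colimit $D_\omega$; thus $1\to_\omega a=a$ for every $a$. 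By Lemma~\ref{FIC} this single equation promotes the weak Heyting algebra $(D_\omega,\to_\omega)$ to a Heyting algebra, which completes the proof.
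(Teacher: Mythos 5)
Your proof is correct and follows essentially the same route as the paper: define $\to_\omega$ stagewise, reduce well-definedness to $1\to_{n+1}(a\to_n b)=(1\to_n a)\to_{n+1}(1\to_n b)$, verify axioms (1)--(4) by localisation to a quotient of $K(D_n)$, and conclude via $1\to_\omega a=a$ and Lemma~\ref{FIC}. The only difference is that you spell out explicitly the chain the paper leaves implicit for the key identity $(\ast)$ --- the easy inequality from Lemma~\ref{wHA:lem}(ii) and the hard one from condition $(G)$ via Proposition~\ref{VP:cor}, Lemma~\ref{Ro:lem} and Proposition~\ref{prop:KKDleq} --- which is a faithful (and slightly more complete) rendering of the paper's appeal to those results.
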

\begin{proof}
The colimit $D_\omega$ of $\Delta$ may be constructed as the union of the $D_n$s with 
$D_n$ identified with the image of $i_n:D_n\hookrightarrow D_{n+1}$. It is then clear that
the 
operations $\to_n:\ D_n\times D_n \to \ D_{n+1}$ yield a total, well-defined 
binary operation on the limit $D_\omega$ provided, for all $n\in \omega$ and all $a,b\in D_n$, we have 
$i_{n+1}(\to_{n}(a,b))=\to_{n+1}(i_{n}(a),i_{n}(b))$. But this is exactly
\[
1\to_{n+1}(a\to_nb)=(1\to_na)\to_{n+1}(1\to_nb).
\]
As we've shown in Corollary~\ref{VP:cor} and Lemma~\ref{Ro:lem}, the sequence $\nabla$,
and thus the dual sequence $\Delta$ have been defined exactly so that this holds. It remains 
to show that the algebra $(D_\omega,\to)$ is a Heyting algebra. Let $a\in D_\omega$, then there is some
$n\geq 0$ with $a\in D_n$. Now $a\to_\omega a=a\to_n a\in D_{n+1}$. Since $D_{n+1}$ is a 
further quotient of $K(D_n)$ and $a\to_n a=1$ already in $K(D_n)$, this is certainly also true
in $D_{n+1}$ and $1_{D_{n+1}}=1_{D_\omega}$ so the equation (1) of weak Heyting algebras
is satisfied in $(D_\omega, \to_\omega)$. Similarly each of the equations (2)--(4) are satisfied
in $(D_\omega, \to_\omega)$ so that it is a weak Heyting algebra. 
Finally, note that as $i_n(a)= 1\to_n a$ for each $a\in D_n$ and $n\in \omega$, we have 
$a = 1\to_\omega a$ for each $a\in D_\omega$. By Lemma~\ref{FIC}, this implies that $(D_\omega,\to_\omega)$ is a Heyting algebra.
\end{proof}

\begin{cor}
For each $n\in \omega$, if $D_0$ is the $n$-generated free distributive lattice, then 
$(D_\omega, \to_\omega)$ constructed above is the  $n$-generated free Heyting algebra. 
\end{cor}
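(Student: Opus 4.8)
The plan is to show that $(D_\omega, \to_\omega)$ satisfies the universal property of the free $n$-generated Heyting algebra: every map from the $n$ generators into an arbitrary Heyting algebra $A$ extends uniquely to a Heyting algebra homomorphism from $(D_\omega, \to_\omega)$ to $A$. By Lemma~\ref{Vo:lem} we already know $(D_\omega, \to_\omega)$ \emph{is} a Heyting algebra, and by construction $D_0 = F_{DL}(n)$ embeds in $D_\omega$, so what remains is purely the universal property. I would phrase this as: given any Heyting algebra $A$ and any function $v$ from the $n$ generators to $A$, there is a unique Heyting homomorphism $\bar v \colon D_\omega \to A$ with $\bar v$ agreeing with $v$ on generators.

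\emph{Existence.} First I would build $\bar v$ by induction along the chain $\Delta$. The generator map $v$ extends uniquely to a lattice homomorphism $v_0 \colon D_0 = F_{DL}(n) \to A$ by freeness of $F_{DL}(n)$. Assuming a lattice homomorphism $v_n \colon D_n \to A$ is defined, I would extend it to $v_{n+1} \colon D_{n+1} \to A$ using the weak-implication structure of $A$: since $D_{n+1}$ is a quotient of $K(D_n)$, and $K(D_n)$ is a quotient of $\Fdd[D_n]$ generated by the arrows $a \nto b$, I send $[a \nto b]$ to $v_n(a) \to v_n(b)$ in $A$. The key verification is that this is well-defined, i.e.\ that $A$ validates all the relational schemas $(1)$--$(6)$ used to form $K(D_n)$ \emph{and} the schema $(9)$ used to cut $D_{n+1}$ out of $K(D_n)$: schemas $(1)$--$(6)$ hold because $A$ is a pre-Heyting algebra (every Heyting algebra is one), and $(9)$ holds because in a Heyting algebra $1 \to x = x$ by Lemma~\ref{FIC}, so both sides of $(9)$ collapse to $a \to b$. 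Compatibility with the embeddings $i_n$ must also be checked: $v_{n+1}(i_n(a)) = v_{n+1}([1 \nto a]) = v_n(1) \to v_n(a) = 1 \to v_n(a) = v_n(a)$, again using $1 \to x = x$ in $A$. Thus the $v_n$ form a cocone over $\Delta$, and the universal property of the colimit $D_\omega$ yields a unique lattice homomorphism $\bar v \colon D_\omega \to A$. That $\bar v$ preserves $\to_\omega$ follows because $\to_\omega$ is computed levelwise: for $a, b \in D_n$, $\bar v(a \to_\omega b) = \bar v(a \to_n b) = v_{n+1}([a \nto b]) = v_n(a) \to v_n(b) = \bar v(a) \to \bar v(b)$.

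\emph{Uniqueness.} Here I would argue that the subalgebra of $D_\omega$ generated by the image of the $n$ generators under $\to_\omega$ and the lattice operations is all of $D_\omega$: by construction $D_0$ is generated as a lattice by the $n$ generators, and each $D_{n+1}$ is generated from $D_n$ by adjoining the implications $a \to_n b$, so every element of $D_\omega$ is an $\to$-$\vee$-$\wedge$ combination of generators. Any Heyting homomorphism agreeing with $v$ on generators is therefore forced on all of $D_\omega$, giving uniqueness.

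I expect the main obstacle to be the well-definedness of the inductive step, specifically confirming that the further quotient imposed by schema $(9)$ in passing from $K(D_n)$ to $D_{n+1}$ does not obstruct the extension. The point to get right is that $(9)$ is precisely the identity forced by $1 \to x = x$ in genuine Heyting algebras, so that any target Heyting algebra $A$ automatically respects it; this is the exact place where being a Heyting algebra (rather than merely pre-Heyting) is used, and it is what makes the colimit construction land on the free \emph{Heyting} algebra rather than something larger. The remaining verifications—that schemas $(1)$--$(6)$ hold in $A$ and that each extension step is a genuine lattice homomorphism—are routine given the earlier development, so I would state them briefly and concentrate the argument on the interaction between $(9)$ and Lemma~\ref{FIC}.
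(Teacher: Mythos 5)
Your proof is correct, but the existence half runs along a genuinely different track from the paper's. The paper does not construct the extension $\bar v$ directly: it first exhibits $F_{HA}(n)$ as the colimit of Ghilardi's rank filtration $D=F_{HA}^0(D)\subseteq F_{HA}^1(D)\subseteq\cdots$, observes that the equations (9) hold in that lower sequence so that each $D_k$ sits as an intermediate quotient between $K^k(D)$ and $F_{HA}^k(D)$, concludes that $F_{HA}(n)$ is a homomorphic image of $D_\omega$, and then obtains the universal mapping property for $D_\omega$ by composing the quotient $D_\omega\twoheadrightarrow F_{HA}(n)$ with the extension $\tilde f$ supplied by the freeness of $F_{HA}(n)$. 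You instead build the cocone $v_k:D_k\to A$ into an arbitrary Heyting algebra $A$ by induction, checking at each step that $A$ validates the schemas (1)--(6) (because $A$ is pre-Heyting) and (9) (because $1\to x=x$ by Lemma~\ref{FIC}), and that $v_{k+1}\circ i_k=v_k$. Your route is more self-contained -- it does not presuppose the existence of $F_{HA}(n)$ or its filtration by implication-rank -- and it localizes precisely where Heyting-ness of the target is used, namely in respecting (9); the paper's route performs the same verification but only against the single algebra $F_{HA}(D)$, and in exchange makes the comparison with Ghilardi's construction explicit, which is one of the paper's stated aims. The uniqueness arguments are identical (generation of $D_\omega$ by the $n$ generators under $\to$ and the lattice operations). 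One dependency worth making explicit in your write-up: your well-definedness step assumes that the congruence presenting $D_{k+1}$ as a quotient of $F_{DL}({\nto}(D_k\times D_k))$ is generated by exactly the instances of (1)--(6) and (9) and nothing more; this is what Propositions~\ref{prop:KKDleq} and~\ref{VP:cor} together with Lemma~\ref{Ro:lem} secure (via the conditions $(\star)_n$ and the duality of $(G)$ with those inequalities), and the paper's own proof leans on the same fact when it asserts that the $D_k$ are intermediate quotients.
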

\begin{proof}
Let $F_{HA}(n)$ denote the free HA freely generated by $n$ generators. This is of course
the same as the free HA generated by $D$, $F_{HA}(D)$, where $D$ is the free distributive
lattice generated by $n$ elements. As discussed at the beginning of this section this lattice
is the colimit (union) of the chain
\[
D=F_{HA}^0(D)\subseteq F_{HA}^1(D)\subseteq F_{HA}^2(D)\ldots
\]
and the implication on $F_{HA}(D)$ is given by the maps 
$\to:(F_{HA}^n(D))^2\to F_{HA}^{n+1}(D)$ with $(a,b)\mapsto(a\to b)$. Further, the natural 
maps sending generators to generators make the following colimit diagrams commute
$$
\xymatrix@M=5pt{
D \ar@{->>}[d]^{id} \ar@{^(->}[r]   
	& K(D)   \ar@{->>}[d] \ar@{^(->}[r]  
	& K(K(D)) \ar@{->>}[d] \ar@{^(->}[r]
	&\ldots \\
D\ar@{^(->}[r]^{\!\!\!\!\!j_0}                           
	& F_{HA}^1(D)\ar@{^(->}[r]^{\!\!\!j_1}                                 
	&F_{HA}^2(D)\ar@{^(->}[r]^{\ \ \ j_2} 
	&\ldots 
}
$$
Now, the system $\Delta$ is obtained from the upper sequence by quotienting out by the 
equations $1\to_{n+1}(a\to_nb)=(1\to_na)\to_{n+1}(1\to_nb)$ for each $n\geq 0$. Since 
these equations all hold for the lower sequence, it follows that the $D_n$s are intermediate 
quotients:
$$
\xymatrix@M=5pt{
D \ar@{->>}[d]^{id} \ar@{^(->}[r]  
	& K(D)   \ar@{->>}[d] \ar@{^(->}[r]  
	& K(K(D)) \ar@{->>}[d] \ar@{^(->}[r]
	&\ldots \\
D_0 \ar@{->>}[d]^{id} \ar@{^(->}[r]^{\!\!\!\!\!i_0}   
	& D_1   \ar@{->>}[d] \ar@{^(->}[r]^{\!\!\!i_1}   
	& D_2 \ar@{->>}[d] \ar@{^(->}[r]^{\ \ \ i_2} 
	&\ldots \\
D\ar@{^(->}[r]^{\!\!\!\!\!j_0}                           
	& F_{HA}^1(D)\ar@{^(->}[r]^{\!\!\!j_1}                                 
	&F_{HA}^2(D)\ar@{^(->}[r]^{\ \ \ j_2} 
	&\ldots 
}
$$
Therefore, $F_{HA}(n)$ is a homomorphic image of $D_\omega$. Moreover, any map $f:n\to B$ with
$B$ a Heyting algebra defines a unique extension $\tilde{f}:F_{HA}(n)\to B$ such that 
$\tilde{f}\circ i=f$, where $i:n\to F_{HA}(n)$ is the injection of the free generators. Since $i$ 
actually maps into the sublattice of $F_{HA}(n)$ generated by $n$, which is the initial lattice
$D=D_0$ in our sequences, composition of $\tilde{f}$ with the quotient map from $D_\omega$ to 
$F_{HA}(n)$ shows that $D_\omega$ also has the universal mapping property (without the 
uniqueness). The uniqueness follows since $D_\omega$ clearly is generated by $n$ as HA 
(since $D_0$ is generated by $n$ as a lattice, $D_1$ is generated by $D_0$ using
$\to_0$, and so on). Since the free HA on $n$ generators is unique up to isomorphism and 
$D_\omega$ has its universal mapping property and is a Heyting algebra, it follows it is the 
free HA (and the quotient map from $D_\omega$ to $F_{HA}(n)$ is in fact an isomorphism).
\end{proof}

\section{A coalgebraic representation of wHAs and PHAs}\label{top:sec}

In this section we discuss coalgebraic semantics for weak and pre-Heyting algebras. 
A coalgebraic representation of modal algebras and distributive modal algebras can be found in 
\cite{Abramsky05}, \cite{KKV04} and \cite{Palmigiano04}, \cite{bezh-kurz:calco07}, respectively.

We recall that a {\it Stone space} is a compact Hausdorff space with a basis of clopen sets. 
For a Stone space $X$, its {\it Vietoris space} $V(X)$ is defined as
the set of all closed subsets of $X$, endowed with the topology generated by
the subbasis 

\begin{enumerate}[(1)]
\item $\Box U =\{F\in V(X):F\subseteq U \}$,

\item $\Diamond U =\{F\in V(X):F\cap U\neq \emptyset\}$,
\end{enumerate}
where $U$ ranges over all clopen subsets of $X$. It is well known that $X$ is a Stone space iff 
$V(X)$ is a Stone space. 
Let $X$ and $X'$ be Stone spaces and $f:X \to X'$ be a continuous map. Then $V(f)= f[\ ]$ is a continuous map between
$V(X)$ and $V(X')$. We denote by $V$ the functor on Stone spaces that maps every Stone space $X$ to its Vietoris 
space $V(X)$ and  maps every continuous map $f$ to $V(f)$. 
Every modal algebra $(B,\Box)$ corresponds to a coalgebra $(X, \alpha: X \to VX)$ 
for the Vietoris functor on Stone spaces \cite{Abramsky05,KKV04}. Coalgebras corresponding to distributive modal algebras 
are described in \cite{Palmigiano04} and \cite{bezh-kurz:calco07}. We note that modal algebras as well as distributive modal algebras
are given by rank 1 axioms. Using the same technique as in Section 3, one can obtain 
a description of free modal algebras and free distributive modal algebras
\cite{Abramsky05}, \cite{Ghilardi95}, \cite{bezh-kurz:calco07}. 

Our goal is to characterize coalgebras corresponding to  weak Heyting algebras and pre-Heyting algebras. 
Recall that a {\it Priestley space} is a pair $(X,\leq)$ where $X$ is a Stone space and $\leq$ is a reflexive, antisymmetric and transitive relation 
satisfying the {\it Priestley separation axiom}:
\begin{center}
If $x,y\in X$ are such that $x\not\leq y$, then there exists a clopen downset $U$\\
 with $y\in U$ and $x\notin U$.
\end{center}

\noindent We denote by {\bf PS} the category of Priestley spaces and order-preserving continuous maps. 
It is well known that every distributive lattice $D$ can be represented as a lattice of all clopen downsets of the Priestley space of 
its prime filters. Given a Priestley space $X$, let $V_r(X)$ be a subspace of $V(X)$ of all closed rooted subsets of $X$.
The same proof as for $V(X)$ shows that $V_r(X)$ is a Stone space.

\begin{lem}
Let $X$ be a Priestley space. Then 
\begin{enumerate}[\em(1)]
\item $(V(X), \subseteq)$ is a Priestley space.
\item $(V_r(X), \subseteq)$ is a Priestley space. 
\end{enumerate}
\end{lem}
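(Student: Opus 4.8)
The order $\subseteq$ is a partial order on any collection of sets, so in both cases the only substantive point is the Priestley separation axiom; the underlying space is in each case already known to be Stone. The plan is to isolate a convenient supply of clopen downsets and clopen upsets of $(V(X),\subseteq)$ coming from the Vietoris subbasis, and then use the Stone structure of $X$ to perform the separation.

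First I would record the key observation: for every clopen $U\subseteq X$, the set $\Box U=\{F\in V(X):F\subseteq U\}$ is a clopen \emph{downset} and $\Diamond U=\{F\in V(X):F\cap U\neq\emptyset\}$ is a clopen \emph{upset} of $(V(X),\subseteq)$. The downset/upset properties are immediate from the definitions: if $F'\subseteq F\subseteq U$ then $F'\subseteq U$, and if $F\subseteq F''$ with $F\cap U\neq\emptyset$ then $F''\cap U\neq\emptyset$. Clopenness is where the Stone hypothesis on $X$ enters: since $U$ clopen implies $X\setminus U$ clopen, one has the identities
\[
V(X)\setminus\Box U=\Diamond(X\setminus U)\qquad\text{and}\qquad V(X)\setminus\Diamond U=\Box(X\setminus U),
\]
so each of $\Box U$ and $\Diamond U$ is the complement of a (subbasic) open set and is therefore closed as well as open. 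I expect this complementation identity to be the one place where care is needed; everything after it is bookkeeping.

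With this in hand, part (1) is direct. Suppose $F\not\subseteq G$ in $V(X)$ and pick a witness $x\in F$ with $x\notin G$. Since $G$ is closed, $x\notin G$, and $X$ is a Stone space, a point can be separated from a disjoint closed set by a clopen set (cover $G$ by clopens missing $x$ and use compactness of $G$); thus there is a clopen $U\subseteq X$ with $x\in U$ and $G\cap U=\emptyset$, i.e.\ $G\subseteq X\setminus U$. Then $\Box(X\setminus U)$ is a clopen downset with $G\in\Box(X\setminus U)$ and $F\notin\Box(X\setminus U)$ (the latter because $x\in F\cap U$). This is exactly the Priestley separation required for the pair $F\not\leq G$, so $(V(X),\subseteq)$ is a Priestley space.

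Finally, for part (2) I would restrict this separation to the subspace $V_r(X)$. Given rooted closed sets $F\not\subseteq G$, the same clopen $U$ produces the set $\Box(X\setminus U)\cap V_r(X)$, which is clopen in the subspace topology and is a downset for the restricted inclusion order, and which again contains $G$ but not $F$. Since $V_r(X)$ is already known to be a Stone space and $\subseteq$ is a partial order on it, this establishes the Priestley axiom and hence that $(V_r(X),\subseteq)$ is a Priestley space. The only thing to double-check here is that clopen downsets descend to the subspace, which is immediate, so no separate argument about whether $V_r(X)$ is closed in $V(X)$ is needed.
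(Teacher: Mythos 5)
Your proof is correct and follows essentially the same route as the paper: both separate the witness point $x\in F\setminus G$ from the closed set $G$ by a clopen subset of $X$ (the paper invokes normality of compact Hausdorff spaces, you use zero-dimensionality plus compactness, which is the more precise justification for getting a \emph{clopen} separator) and then observe that the resulting $\Box$-set is a clopen $\subseteq$-downset of $V(X)$ containing $G$ but not $F$, with part (2) obtained by restriction to the subspace $V_r(X)$.
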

\begin{proof}
(1) As we mentioned above $V(X)$ is a Stone space. Let $F,F'\in V(X)$ and $F\not\subseteq F'$. Then there exists $x\in F$ such that 
$x\notin F'$. Since every compact Hausdorff space is normal, there exists a clopen set $U$ 
such that $F'\subseteq U$ and $x\notin U$. Thus, $F'\in \Box U$ and $F\notin \Box U$. All we need to observe now is that 
for each clopen $U$ of $X$, the set $\Box U$ is a clopen $\subseteq$-downset of $V(X)$. But this is obvious.

   The proof of (2) is the same as for (1). 
\end{proof}

\noindent Let $(X,\leq)$ and $(X',\leq')$ be Priestley spaces and $f:X \to X'$ a continuous order-preserving map. Then it is easy to check 
that $V(f)= f[\ ]$ is a continuous order-preserving map between $(V(X),\subseteq)$ and $(V(X'),\subseteq)$, 
and $V_r(f)= f[\ ]$ is a continuous order-preserving map between $(V_r(X),\subseteq)$ and $(V_r(X'),\subseteq)$.
Thus, $V$ and $V_r$ define functors on the category of Priestley spaces.

\begin{defi}\label{wHS}(Celani and Jansana \cite{CJ05})
A {\it weak Heyting space} is a triple $(X,\leq,R)$ such that $(X,\leq)$ is a Priestley space 
and $R$ is a binary relation on $X$ satisfying the following conditions:
\begin{enumerate}[(1)]
\item $R(x)=\{y\in X: xRy\}$ is a closed set, for each $x\in X$.
\item For each $x,y,z\in X$ if $x\leq y$ and $xRz$, then $yRz$. 
\item For each clopen set $U\subseteq X$ the sets $[R](U)=\{x\in X: R(x)\subseteq U\}$ and $\langle R\rangle (U)=
\{x\in U: R(x)\cap U\neq \emptyset\}$ are clopen.
\end{enumerate}
\end{defi}

\noindent Let $(X,\leq, R)$ and $(X',\leq', R')$ be two weak Heyting
spaces. We say that $f:X \to X'$ is a weak Heyting morphism if $f$ is
continuous, $\leq$-preserving and $R$-bounded morphism (i.e., for each
$x\in X$ we have $fR(x)=R'f(x)$).  Then the category of weak Heyting
algebras is dually equivalent to the category of weak Heyting spaces
and weak Heyting morphisms \cite{CJ05}.  We will quickly recall how
the dual functors are defined on objects. Given a weak Heyting algebra
$(A,\to)$ we take a Priestley dual $X_A$ of $A$ and define $R_A$ on
$X_A$ by setting: for each $x,y\in X_A$, $xR_Ay$ if for each $a,b\in
A$, $a\to b\in x$ and $b\in x$ imply $b\in y$. Conversely, if
$(X,\leq, R)$ is a weak Heyting space, then we take the distributive
lattice of all clopen downsets of $X$ and for clopen downsets
$U,V\subseteq X$ we define $U \to V = \{x\in X: R(x)\cap U\subseteq
V\}$.

\begin{rem}
In fact, Celani and Jansana  \cite{CJ05} work with clopen upsets instead of downsets and the inverse of the relation $R$. 
We chose working with downsets to be consistent with the previous parts of this paper. 
\end{rem}

\begin{thm}\label{wHcoalg}
The category of weak Heyting spaces is isomorphic to the category of Vietoris coalgebras on the category of Priestley spaces.
\end{thm}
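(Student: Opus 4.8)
The plan is to exhibit the isomorphism of categories as a mere relabelling of data: a weak Heyting space $(X,\leq,R)$ and a Vietoris coalgebra $(X,\alpha_R)$ carry exactly the same information, where $\alpha_R\colon X\to V(X)$ is defined by $\alpha_R(x)=R(x)$, and conversely a coalgebra $\alpha\colon X\to V(X)$ gives the relation $R_\alpha$ with $xR_\alpha y\iff y\in\alpha(x)$. On underlying Priestley spaces and on underlying functions both assignments are the identity, so once the object- and morphism-level correspondences are checked, the functor they define is automatically an isomorphism of categories.

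First I would verify that $\alpha_R$ is a morphism of Priestley spaces $X\to V(X)$, using the three axioms of Definition~\ref{wHS} one at a time. Axiom (1) says each $R(x)$ is closed, i.e.\ $\alpha_R(x)\in V(X)$, so $\alpha_R$ is well defined. Axiom (2) says $x\leq y$ implies $R(x)\subseteq R(y)$, i.e.\ $\alpha_R$ is order-preserving from $(X,\leq)$ to $(V(X),\subseteq)$. The crux is continuity, which is where axiom (3) enters. Since the Vietoris topology on $V(X)$ is generated by the subbasic sets $\Box U$ and $\Diamond U$ with $U$ clopen, and since for clopen $U$ both $\Box U$ and $\Diamond U$ are clopen in $V(X)$ (indeed $\Diamond U=V(X)\setminus\Box(X\setminus U)$, and $\Box U$ is clopen by the preceding lemma), it suffices to compute preimages of subbasic sets. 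One gets
\[
\alpha_R^{-1}(\Box U)=\{x\in X\mid R(x)\subseteq U\}=[R](U),\qquad
\alpha_R^{-1}(\Diamond U)=\{x\in X\mid R(x)\cap U\neq\emptyset\}=\langle R\rangle(U),
\]
so axiom (3) asserts precisely that these preimages are clopen; because the generating sets are themselves clopen, this is equivalent to continuity of $\alpha_R$. Running the same computation backwards shows that for any Priestley morphism $\alpha\colon X\to V(X)$ the relation $R_\alpha$ satisfies (1)--(3), and a direct check gives $\alpha_{R_\alpha}=\alpha$ and $R_{\alpha_R}=R$, so the two assignments are mutually inverse on objects.

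It then remains to match morphisms. A coalgebra morphism $f\colon(X,\alpha)\to(X',\alpha')$ is a Priestley morphism (continuous and order-preserving) making $V(f)\circ\alpha=\alpha'\circ f$ commute; since $V(f)=f[\,\cdot\,]$ is forward image, this reads $f[\alpha(x)]=\alpha'(f(x))$ for all $x$, which under $\alpha=\alpha_R$ and $\alpha'=\alpha_{R'}$ is exactly the bounded-morphism condition $f[R(x)]=R'(f(x))$ of a weak Heyting morphism; the remaining requirements (continuity, order-preservation) coincide verbatim. Hence the two classes of morphisms agree, and since both functors act as the identity on underlying maps they preserve identities and composition trivially, yielding the desired isomorphism of categories. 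I expect the continuity equivalence --- identifying $\alpha_R^{-1}(\Box U)$ and $\alpha_R^{-1}(\Diamond U)$ with $[R](U)$ and $\langle R\rangle(U)$ and exploiting that the subbasic sets are clopen, so that openness of preimages upgrades to clopenness and conversely --- to be the only genuinely substantive step, with everything else reducing to bookkeeping.
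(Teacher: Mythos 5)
Your proposal is correct and follows essentially the same route as the paper: objects correspond by $\alpha_R(x)=R(x)$ with axioms (1)--(3) matching well-definedness, order-preservation, and continuity, and morphisms correspond because $V(f)\circ\alpha=\alpha'\circ f$ is literally the $R$-boundedness condition; you merely spell out the continuity equivalence (via $\alpha_R^{-1}(\Box U)=[R](U)$ and $\alpha_R^{-1}(\Diamond U)$, using that the subbasic sets are clopen) which the paper leaves as an assertion. The only cosmetic discrepancy is that the paper's $\langle R\rangle(U)$ is restricted to $x\in U$, but as you note the $\Diamond$ case already follows from the $\Box$ case via $\Diamond U=V(X)\setminus\Box(X\setminus U)$, so nothing is lost.
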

\begin{proof}
Given a weak Heyting space $(X,\leq, R)$. We consider a coalgebra $(X, R(.):X \to V(X))$. The map $R(.)$ is well defined by 
Definition~\ref{wHS}(1). It is order-preserving by Definition~\ref{wHS}(2) and is continuous by Definition~\ref{wHS}(3). 
Thus, $(X, R(.):X \to V(X))$ is a $V$-coalgebra. Conversely, let $(X,\alpha: X \to V(X))$ be a $V$-coalgebra. 
Then $(X,R_\alpha)$, where $x R_\alpha y$ iff $y\in \alpha(x)$, is a weak Heyting space. Indeed, 
$R$ being well defined and order-preserving imply conditions (1) and (2) of Definition~\ref{wHS}, respectively.
Finally, $\alpha$ being continuous implies condition (3) of Definition~\ref{wHS}. 
That this correspondence can be lifted to the isomorphism of categories is easy to check. 
\end{proof}

We say that a weakly Heyting space $(X,\leq,R)$ is a {\it pre-Heyting space} if for each $x\in X$ the set $R(x)$ is rooted.

\begin{thm}
\quad
\begin{enumerate}[\em(1)]
\item The category of pre-Heyting algebras is dually equivalent to the category of pre-Heyting spaces.
\item The category of pre-Heyting spaces is isomorphic to the category of \ $V_r$-coalgebras on the category 
of Priestley spaces. 
\end{enumerate}
\end{thm}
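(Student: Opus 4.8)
The plan is to obtain both statements by restricting structure that is already available: part~(1) is the restriction of the Celani--Jansana duality (Definition~\ref{wHS} and the discussion following it) to the subvariety of pre-Heyting algebras, and part~(2) is the restriction of Theorem~\ref{wHcoalg} along the inclusion $V_r\hookrightarrow V$. Since pre-Heyting algebras form a full subcategory of weak Heyting algebras (the homomorphisms are unchanged) and pre-Heyting spaces a full subcategory of weak Heyting spaces (the weak Heyting morphisms are unchanged), in each case it suffices to match up the object-level correspondences; the morphisms then take care of themselves by general nonsense about restricting an equivalence to full subcategories. Thus the whole content lies in translating the extra axioms (5) and (6) of Definition~\ref{QHA} into the rootedness condition defining pre-Heyting spaces.

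I would carry out this translation as follows. Let $(A,\to)$ be a weak Heyting algebra with dual weak Heyting space $(X,\le,R)$, so that $A$ is the lattice of clopen downsets of $X$ and $U\to V=\{x\in X: R(x)\cap U\subseteq V\}$. Taking $U=X$ (the clopen downset representing $1$) gives $1\to V=\{x: R(x)\subseteq V\}$. Hence axiom (5), namely $1\to 0=0$, unwinds to $\{x: R(x)=\emptyset\}=\emptyset$, i.e.\ to $R(x)\neq\emptyset$ for every $x$. For axiom (6) note that the inclusion $\{x:R(x)\subseteq U\}\cup\{x:R(x)\subseteq V\}\subseteq\{x:R(x)\subseteq U\cup V\}$ holds for every $R$, so the equation $(1\to a)\vee(1\to b)=1\to(a\vee b)$ holds in $A$ if and only if, for all clopen downsets $U,V$ and all $x\in X$,
\[
R(x)\subseteq U\cup V\ \implies\ R(x)\subseteq U\ \text{ or }\ R(x)\subseteq V;
\]
that is, if and only if each closed set $R(x)$ is \emph{prime} with respect to clopen downsets. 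It remains to show that a nonempty closed set $F$ in a Priestley space is prime in this sense exactly when it is rooted.

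The direction rooted $\Rightarrow$ prime is immediate: if $p$ is the root of $F$ and $F\subseteq U\cup V$ with $U,V$ downsets, then $p\in U$ or $p\in V$, whence $F\subseteq{\downarrow}p\subseteq U$ or $F\subseteq V$. For the converse I would argue as in the finite case of Theorem~\ref{thrm:JKD}(i), now using compactness. Primeness together with $F\neq\emptyset$ says precisely that the filter $\{U : U\text{ clopen downset},\ F\subseteq U\}$ of the lattice of clopen downsets is proper and prime, so by Priestley duality it corresponds to a unique point $p\in X$, characterised by $p\in U\iff F\subseteq U$ for every clopen downset $U$. I then check that $p$ is the root of $F$: if some $z\in F$ had $z\not\le p$, Priestley separation would give a clopen downset $U$ with $p\in U$ and $z\notin U$, so $F\subseteq U$ yet $z\in F\setminus U$, a contradiction; hence $F\subseteq{\downarrow}p$. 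Moreover $p$ lies in $\bigcap\{U : F\subseteq U\}={\downarrow}F$, the smallest closed downset containing the compact set $F$ (the equality is a routine compactness argument using Priestley separation), so $p\le z$ for some $z\in F$, and with $z\le p$ this forces $p=z\in F$. Thus $F$ is rooted with root $p$. Combining the two equivalences, a weak Heyting algebra satisfies (5) and (6) iff every $R(x)$ is rooted, i.e.\ iff its dual space is a pre-Heyting space, which gives part~(1).

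For part~(2) I would simply restrict the isomorphism of Theorem~\ref{wHcoalg}, which sends a weak Heyting space $(X,\le,R)$ to the $V$-coalgebra $(X,R(\cdot)\colon X\to V(X))$. By definition $(X,\le,R)$ is a pre-Heyting space precisely when $R(x)$ is rooted, i.e.\ $R(x)\in V_r(X)$, for every $x$; this is exactly the condition that the continuous order-preserving map $R(\cdot)$ corestricts to a map $X\to V_r(X)$, turning it into a $V_r$-coalgebra, where $V_r(X)$ is a Priestley space and $V_r$ a functor as established above. Conversely, composing a $V_r$-coalgebra with the inclusion $V_r(X)\hookrightarrow V(X)$ yields a $V$-coalgebra whose associated relation has rooted $R$-images, hence a pre-Heyting space. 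As the coalgebra morphisms are unchanged under corestriction, the isomorphism of Theorem~\ref{wHcoalg} cuts down to an isomorphism between pre-Heyting spaces and $V_r$-coalgebras. The only genuinely non-formal step in the whole argument is the topological equivalence ``prime $\Leftrightarrow$ rooted'' for closed sets in part~(1); this is where I expect the main obstacle, since it is the point at which the finite combinatorics of Theorem~\ref{thrm:JKD} must be replaced by a compactness-plus-Priestley-separation argument. Everything else is bookkeeping on top of results already proved.
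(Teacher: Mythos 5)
Your proposal is correct and follows the same overall strategy as the paper: both reduce part (1) to showing that axioms (5)--(6) of Definition~\ref{QHA} hold iff every $R(x)$ is rooted (with (5) giving nonemptiness and (6) giving the primeness/unique-root condition), and both obtain part (2) by corestricting the coalgebra map of Theorem~\ref{wHcoalg} from $V(X)$ to $V_r(X)$. The one place where you genuinely diverge is the ``prime $\Rightarrow$ rooted'' step for a nonempty closed subset $F$ of a Priestley space. The paper argues via maximal points: every point of the closed set $F$ lies below a maximal point of $F$, and if there were two distinct maximal points one could separate them by clopen downsets $U,V$ with $F\subseteq U\cup V$ but $F\not\subseteq U$ and $F\not\subseteq V$ --- for which it cites an external result (Theorem 2.7(a) of the Bezhanishvili--Bezhanishvili reference) rather than proving the separation. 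You instead observe that primeness plus nonemptiness makes $\{U \text{ clopen downset} : F\subseteq U\}$ a proper prime filter, hence a point $p$ with $p\in U\iff F\subseteq U$, and then use Priestley separation and compactness to show $F\subseteq{\downarrow}p$ and $p\in F$. Your route is self-contained and arguably cleaner, since it replaces the citation by a direct application of Priestley duality; the paper's route is shorter on the page but outsources the only nontrivial topology. Both are sound, and your handling of the functoriality/morphism level (restriction of a dual equivalence to full subcategories) matches the paper's ``routine verification'' claim.
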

\begin{proof}
(1) By the duality of weak Heyting algebras and weak Heyting spaces it is sufficient to show  
that a weak Heyting algebra satisfies conditions (5)--(6) 
of Definition~\ref{QHA} iff $R(x)$ is rooted. We will show, as in Theorem~\ref{thrm:JKD}, that the axiom (5) is equivalent to 
$R(x)\neq \emptyset$, for each $x\in X$, while axiom (6) is equivalent to $R(x)$ having a unique maximal element.
Assume that a weak Heyting space $(X,\leq,R)$ validates axiom (5). Then in the weak Heyting algebra of all clopen downsets of $X$
we have $X\to \emptyset = \emptyset$. 
Thus for each $x\in X$ we have $R(x)\subseteq \emptyset$ iff $x\in \emptyset$. Thus, for each $x\in X$ 
we have $R(x)\neq \emptyset$. Now suppose for each clopen downsets $U,V\subseteq X$ the following holds 
$X\to (U \cup V) \subseteq (X\to U) \cup (X\to V)$. Then we have that $R(x)\subseteq U\cup V$ implies $R(x)\subseteq U$ or 
$R(x)\subseteq V$. Since $R(x)$ is closed and $X$ is a Priestley space, we have that every point of $R(x)$ is below
some maximal point of $R(x)$. We assume that there exists more than one maximal point of $R(x)$. Then the same argument as in  
\cite[Theorem 2.7(a)]{BezhBezh08} shows that there are clopen downsets $U$ and $V$ such that $R(x)\subseteq U\cup V$, but 
$R(x) \not\subseteq U$, $R(x)\not\subseteq V$. This is a contradiction, so $R(x)$ is rooted.
On the other hand, it is easy to check that if $R(x)$ is rooted for each $x\in X$, then (5) and (6) are valid. 
Finally, a routine verification shows that this correspondence can be lifted to 
an isomorphism of the categories of 
pre-Heyting algebras and pre-Heyting spaces.

 The proof of (2) is similar to the proof of Theorem~\ref{wHcoalg}. The extra condition on pre-Heyting spaces obviously implies that 
a map $R(.): X \to V_r(X)$ is well defined and conversely $(X,\alpha: X\to V_r(X))$ being a coalgebra implies that 
$R_\alpha(x)$ is rooted for each $x\in X$. The rest of the proof is a routine check. 
\end{proof}

Thus, we obtained a coalgebraic semantics/representation of weak and pre-Heyting algebras.

\section{Conclusions and future work}
In this paper we described finitely generated free (weak, pre-) Heyting algebras 
using an initial algebra-like construction. The main idea is to split 
the axiomatization of Heyting algebras into its rank 1 and non-rank 1 parts. 
The rank 1 approximants of Heyting algebras are weak and pre-Heyting algebras. 
For weak and pre-Heyting algebras we applied the standard initial algebra construction and then adjusted it for 
Heyting algebras. We used Birkhoff duality for finite distributive lattices and finite posets to obtain the dual characterization 
of the finite posets that approximate the duals  of  free algebras. As a result, we obtained Ghilardi's 
representation of these posets in a systematic and modular way. We also gave a coalgebraic representation of weak and pre-Heyting algebras.

There are a few possible directions for further research.  
As we mentioned in the introduction, although we considered Heyting algebras (intuitionistic logic),
this method could be applied to other non-classical logics. More precisely, the method is available  
if a signature of the algebras for this logic can be obtained by adding an extra 
operator to a locally finite variety. Thus, various non-rank 1 modal logics such as ${\bf S4}$, ${\bf K4}$ and other
more complicated modal logics, as well as distributive modal logics, are the obvious candidates.
On the other hand, one cannot always expect to have such a simple representation of free algebras. 
The algebras corresponding to other many-valued logics such as $MV$-algebras, $l$-groups, $BCK$-algebras and so on, 
are other examples where this method could lead to interesting representations. 
The recent work \cite{BCG06} that connects ontologies with free distributive algebras with operators shows that 
such representations of free algebras are not only interesting from a theoretical point of view, but could have 
very concrete applications.

\vspace{2mm}

\noindent
{\bf Acknowledgements}
The first listed author would like to thank Mamuka Jibladze and Dito Pataraia for many interesting 
discussions on the subject of the paper. The authors are also very grateful to the anonymous referees
for many helpful suggestions that substantially improved the presentation of the paper. 

 \bibliographystyle{abbrv}
 \bibliography{MyBib}

\end{document}